\documentclass[11pt]{article}

\usepackage{fullpage}
\usepackage{float}
\usepackage{epsfig}
\usepackage{amsmath}
\usepackage{amssymb}
\usepackage{amstext}
\usepackage{amsthm}
\usepackage{xspace}
\usepackage{latexsym}
\usepackage{verbatim}
\usepackage{multirow}
\usepackage{ifthen}
\usepackage{url}
\usepackage{hyperref}
\usepackage{natbib}

\usepackage[ruled,vlined]{algorithm2e}

\DeclareMathAlphabet{\mathitbf}{OML}{cmm}{b}{it}

\newtheorem{theorem}{Theorem}[section]
\newtheorem{definition}{Definition}

\newtheorem{lemma}[theorem]{Lemma}

\newtheorem{assumption}{Assumption}
\renewcommand{\comment}[1]{}

\newenvironment{proofof}[1]{\vspace{0.1in}\noindent{\sc Proof of #1.}}{\hfill\qed}

%
%
\newcommand{\prob}[2][]{\text{\bf Pr}\ifthenelse{\not\equal{}{#1}}{_{#1}}{}\!\left[#2\right]}
\newcommand{\expect}[2][]{\text{\bf E}\ifthenelse{\not\equal{}{#1}}{_{#1}}{}\!\left[#2\right]}

\newcommand{\yestag}{\addtocounter{equation}{1}\tag{\theequation}}

\newcommand{\argmax}{\operatorname{argmax}}

\newcommand{\agind}[1][i]{_{#1}}

\newcommand{\inversed}[1]{#1^{-1}}
\newcommand{\ironed}{\bar}

\newcommand{\differentiated}[1]{#1'}
\newcommand{\fortype}{\tilde}
\newcommand{\estimated}{\hat}
\newcommand{\sampled}{\hat}

\newcommand{\noaccents}[1]{#1}
\newcommand{\composed}[3]{#1{#2{#3}}}

\newcommand{\newindexedvar}[4][\noaccents]{%
\expandafter\newcommand\expandafter{\csname #2\endcsname}{#1{#4}}%
\expandafter\newcommand\expandafter{\csname #2s\endcsname}{#1{\boldsymbol{#4}}}%
\expandafter\newcommand\expandafter{\csname #2sm#3\endcsname}[1][#3]{#1{\boldsymbol{#4}}_{-##1}}%
\expandafter\newcommand\expandafter{\csname #2#3\endcsname}[1][#3]{#1{#4}\agind[##1]}%
\expandafter\newcommand\expandafter{\csname #2#3th\endcsname}[1][#3]{#1{#4}_{(##1)}}%
}

\newcommand{\rev}{R}
\newcommand{\irev}{\overline{R}}
\newcommand{\marg}{R'}
\newcommand{\knalloc}[1][k]{x^{(#1:n)}}
\newcommand{\kalloc}[1][k]{x_{#1}}
\newcommand{\quant}{q}

\newindexedvar{wal}{k}{w}
\newindexedvar[\differentiated]{margwal}{k}{\wal}
\newindexedvar{cumwal}{k}{W}

\newindexedvar[\ironed]{iwal}{k}{\wal}
\newindexedvar[\ironed]{cumiwal}{k}{\cumwal}
\newindexedvar[\composed{\differentiated}{\ironed}]{margiwal}{k}{\wal}

\newindexedvar{yal}{k}{y}
\newindexedvar{cumyal}{k}{Y}
\newindexedvar[\ironed]{iyal}{k}{\yal}
\newindexedvar[\composed{\differentiated}{\ironed}]{margiyal}{k}{\yal}

\newindexedvar{murev}{k}{P}
\newindexedvar[\estimated]{emurev}{k}{\murev}
\newindexedvar[\differentiated]{mumarg}{k}{\murev}

\newindexedvar[\ironed]{imurev}{k}{\murev}
\newindexedvar[\composed{\differentiated}{\ironed}]{imumarg}{k}{\murev}


\newcommand{\ghat}{\hat{g}}
\newcommand{\bhat}{\hat{b}}

\newcommand{\vhat}{\hat{v}}
\renewcommand{\[}{\left[}
\renewcommand{\]}{\right]}


\newcommand{\rateN}[1][\samples]{r(#1)}
\newcommand{\rate}{\rateN} 
\DeclareMathOperator{\MSE}{MSE}
\newcommand{\MSEN}[1][]{\MSE_{#1}(\samples)}

\newindexedvar{val}{i}{v}

\newindexedvar{eff}{i}{e}

\newindexedvar{bid}{i}{b}
\newindexedvar[\estimated]{ebid}{i}{\bid}
\newindexedvar[\sampled]{sbid}{i}{\bid}

\newindexedvar{strat}{i}{s}

\newindexedvar{virt}{i}{\phi}
\newindexedvar[\inversed]{virtinv}{i}{\virt}

\newindexedvar[\ironed]{ivirt}{i}{\virt}

\newindexedvar{dist}{i}{F}

\newindexedvar{dens}{i}{f}

\newindexedvar{price}{i}{p}
\newindexedvar[\fortype]{tprice}{i}{p}

\newindexedvar{alloc}{i}{x}

\newindexedvar[\fortype]{talloc}{i}{\alloc}

\newindexedvar{util}{i}{u}
\newindexedvar[\fortype]{tutil}{i}{u}

\newcommand{\REV}[1]{\mathbf{Rev}[#1]}

\newcommand{\samples}{N}
\newcommand{\eps}{\epsilon}

\newcommand{\opt}[2]{\mathbf{OPT}(#1,#2)}

\begin{document}
\title{Mechanism Design for Data Science}
\author{Shuchi Chawla\thanks{University of Wisconsin - Madison}
\and
Jason Hartline\thanks{Northwestern University}
\and
Denis Nekipelov\thanks{University of California - Berkeley and University of Virginia}
}

\maketitle

\begin{abstract}
Good economic mechanisms depend on the preferences of participants in
the mechanism.  For example, the revenue-optimal auction for selling
an item is parameterized by a reserve price, and the appropriate
reserve price depends on how much the bidders are willing to pay.  A
mechanism designer can potentially learn about the participants'
preferences by observing historical data from the mechanism; the
designer could then update the mechanism in response to learned
preferences to improve its performance.  The challenge of such an
approach is that the data corresponds to the actions of the
participants and not their preferences.  Preferences can potentially
be inferred from actions but the degree of inference possible depends
on the mechanism.  In the optimal auction example, it is impossible to
learn anything about preferences of bidders who are not willing to pay
the reserve price.  These bidders will not cast bids in the auction
and, from historical bid data, the auctioneer could never learn that
lowering the reserve price would give a higher revenue (even if it
would).  To address this impossibility, the auctioneer could sacrifice
revenue optimality in the initial auction to obtain better inference
properties so that the auction's parameters can be adapted to changing
preferences in the future.  This paper develops the theory for optimal
mechanism design subject to good inferability.
\end{abstract}



\section{Introduction}
\label{s:intro}


The promise of data science is that if data from a system can be
recorded and understood then this understanding can potentially be
utilized to improve the system.  Behavioral and economic data,
however, is different from scientific data in that it is subjective to
the system.  Behavior changes with system changes, and to predict
behavior for any given system change or to optimize over system
changes, the model that generates the behavior must be inferred from
the behavior.  The ease with which this inference can be performed
generally also depends on the system.  Trivially, a system that
ignores behavior does not admit any inference of a behavior
generating model that can be used to predict behavior in a system that
is responsive to behavior.  To realize the promise of data science in
economic systems, a theory for the design of such systems must also
incorporate the desired inference properties.


Consider as an example the revenue-maximizing auctioneer.  If the
auctioneer has knowledge of the distribution of bidder values then she
can run the first-price auction with a reserve price that is tuned to
the distribution.  Under some mild distributional assumptions, with
the appropriate reserve price the first-price auction is revenue
optimal \citep{mye-81}.  Notice that the historical bid data for the
first-price auction with a reserve price will never have bids for
bidders whose values are below the reserve.  Therefore, there is no
data analysis that the auctioneer can perform that will enable
properties of the distribution of bidder values below the reserve
price to be inferred. It could be, nonetheless, that over time the
population of potential bidders evolves and the optimal reserve price
lowers.  This change could go completely unnoticed in the auctioneer's
data.  The two main tools for optimizing revenue in an auction are
reserve prices (as above) and ironing.\footnote{Optimal auctions are
  constructed by mapping values to virtual values and then optimizing
  the virtual surplus (i.e., the sum of the the virtual values of
  agents who are served).  Reserve prices exclude bidders with
  negative virtual values; bidders with value below the reserve price
  may as well not show up at the auction.  Ironing comes from the
  process of constructing virtual values where a non-monotone virtual
  value function is ``ironed'' to be flat.  When two or more bidders
  have the same virtual value they receive the same service
  probability (and payment); bidders in an ironed interval may as well
  make the same bid.}  Both of these tools cause pooling behavior
(i.e., bidders with distinct values take the same action) and
economic inference cannot thereafter differentiate these pooled
bidders.  In order to maintain the distributional knowledge necessary
to be able to run a good auction in the long term, the auctioneer must
sacrifice the short-term revenue by running a non-revenue-optimal
auction.


Economic inference in auctions is based on a very straightforward
premise.  We, the analyst, would like to infer the values of bidders
from their bids.  This is possible because a bidder's bid is in best
response to the bid distribution and the bid distribution is observed
in the data.  Given any value for a bidder, and the empirical
distribution of bids, we can solve for the bidder's utility maximizing
bid. The resulting bid function can be easily obtained, and bids can
be mapped to ranges of values via the inverse of this function.  If
the true equilibrium bid distribution is continuous with strictly
positive density then both the noise in estimation and the width of
the estimated value intervals vanish as the size of the observed data
set increases.  Notice that our earlier observation that nothing can
be learned about the value distribution below the reserve price (or in
an ironed interval) corresponds to the probability that a bidder is
served being constant for bids within a given interval; The bid
distribution would have zero density on this interval.


Our examples above that demonstrate a failure to perform economic
inference are extreme in that the failure was due to total
unresponsiveness of the mechanism.  Intuitively, the degree to which
good inference is possible should depend on the degree of
responsiveness.  The ideal econometric model for inference, however,
does not separate mechanisms by the degree of responsiveness.  As an
example, consider auctioning a single item to one of 100 agents by the
highest-bid-wins auction with either first-price or all-pay semantics.
With first-price semantics only the winner pays his bid; with all-pay
semantics all bidders pay their bids.  Consider a bidder with value
close to the median of the distribution.  With the first-price auction
the bidder bids by shading his value to the expected highest-other-bid
given that his bid is the highest.  For most continuous distributions,
this is slightly below his value.  With the all-pay auction, relative
to the first-price bid, the bidder also reduces his bid by a
multiplicative factor proportional to his likelihood of being the
highest bidder.  For a median value bidder with 99 other bidders, this
probability is $2^{-99}$.  In contrast to the first-price auction
where determining a reasonable bid is relatively easy, it is unlikely
that a bidder would respond in an all-pay auction to a degree of
accuracy that allows inference.  As described above, the classical
model for inference suggests that, in both cases, the value
distribution can be inferred to a precision that vanishes with the
number of observations.  Our theory will enable the distinction of
these two mechanisms as one with good inference and one with bad
inference.


In summary, the goal of this work is in a theory for the design of
mechanisms that perform well in terms of revenue and inference.  The
motivating story that the theory should resolve is that of a
revenue-maximizing auctioneer continuously adapting an auction in a
slowly changing world.  As the distribution of agent preferences
evolves, the equilibrium behavior evolves, and the designer observes
behavior and adapts the mechanism.

\paragraph{Model and Justification}

To explore this question of optimization of revenue of auctions with
good inference we consider the following auction and bidding model.
The auction design space is given by a position auction environment
and either first-price or all-pay semantics.  There are $n$ agents and
$n$ positions and each position has a corresponding service
probability.  The auctioneer may modify the position weights by moving
service probability from high positions to low positions (or
discarding) and then runs a rank-by-bid position auction on the
modified position weights.  The bidders are single-dimensional with
private values drawn independently and identically from a common prior
distribution, have linear utility given by their value for service
received minus any payment made, and bid in Bayes-Nash equilibrium (i.e.,
the model of classical auction theory). Discussion of these modeling choices  follows.

%
%

We choose position auction environments because good revenue can only
be obtained by optimization over the whole range of the distribution
of values; therefore, good inference is especially important.  For
example, reserve pricing alone cannot achieve better than a
logarithmic (in the number of agents) approximation to the optimal
auction \citep{HY-11}.  This choice, then places the problem in an
environment where even obtaining a constant approximation to the
optimal revenue is non-trivial.  Aiming for a constant approximation
will allow us to identify simple mechanisms that provide economic
intuition; whereas, optimal mechanisms may be analytically
intractable and economically opaque.

%
%


We choose to restrict the auctioneer to {\em rank-based} auctions
where the position auction can be modified by shifting service
probability downward (or discarding); importantly we disallow reserve
prices or ironing (by value).  Three reasons follow.  First, as
observed above, reserve prices and ironing do not permit good
inference.  Second, rank-based first-price and all-pay position
auctions have a unique Bayes-Nash equilibrium that is symmetric and in
which bids are always in the same order as values; therefore, the
equilibrium allocation rule is predetermined by the position weights
alone \citep{CH-13}.  Third, as a result that we will prove, both
reserve prices and ironing can be simulated by shifting probability
downward (or discarding) with at most a constant factor loss in
revenue.  Furthermore, the optimal rank-based auction (which obtains
at least the revenue of this simulation) can be obtained by
ironing by rank and discarding low ranks only.  Ironing by rank
considers a set of consecutive position and averages their service
probabilities.  Thus, this restriction is without loss up to a
constant approximation.


We choose first-price and all-pay payment semantics because they are
fundamental auction types with non-truthtelling equilibrium in the
classical model for auction theory.  This choice contrasts with the
standard choice in the mechanism design literature where attention is
often restricted to mechanisms with the truthtelling equilibrium.  Of
course, for truthful mechanisms in the classical model, bids are equal
to values and inference is trivial.  In practice, however, these
nice-in-theory auctions are rarely employed.  \citet{AM-06} give some
explanation for this non-translation from theory to practice in an
essay entitled ``The Lovely but Lonely Vickrey Auction.''  Moreover,
even if an auction possessing a truthtelling equilibrium in theory is
employed in practice, truthtelling is unlikely.  This loss of
truthtelling could come from externalities, exposure, outside options,
or privacy concerns.  In contrast, our restriction to first-price and
all-pay auctions permits the consideration of inference in the
classical, and most fundamental, model of auction theory.

\paragraph{Results and organization}


Our results are as follows.
\begin{itemize}
\item In Section~\ref{s:prelim} we show how to estimate the bid
  distribution from a finite sample of observed bids. Then we provide
  a general principle for inferring values from the estimated
  distribution of bids and its derivative for a non-truthful
  mechanism. We illustrate this principle in application to a first
  price and an all-pay mechanism.

\item We prove in Section~\ref{s:iron-rank} that rank based auctions
  are close to optimal for revenue: for every value distribution and
  every symmetric auction facing a position feasibility
  constraint\footnote{For irregular value distributions, we need the
    assumption that the auction does not iron over the values
    corresponding to the highest $1/n$ quantiles.},
  there exists a rank based auction satisfying the same constraint
  that achieves a $4$-approximation (and in most cases, a close-to-$1$
  approximation). Further, we prove that optimizing for revenue over
  the class of rank based auctions requires knowing $n$ parameters
  of the value distribution that we call the multi-unit revenues.
  These multi-unit revenues are a discrete analog to the revenue
  curve, the derivative of which defines virtual values for
  revenue-optimal auctions.

\item We show how to estimate the multi-unit revenues from samples
drawn from the bid distribution of a mechanism in
Section~\ref{s:param-inf}. We give bounds on the error in estimation
in terms of the allocation rule of the mechanism, and the number of bid
samples. We show that the error decreases as the inverse square root
of the number of bid samples.


\item Also in Section~\ref{s:param-inf}, we show that rank based
auctions achieve a good revenue versus inference tradeoff: for every
$\epsilon>0$, there exists a rank based auction that obtains a
$(1-\epsilon)$ fraction of the optimal rank based revenue, and
achieves error in inferring the multi-unit revenues that scales as
$O(1/\epsilon)$.\footnote{Note that this scaling in error can be
offset by obtaining a large number of bid samples, and by reducing
measurement error in bids.}

\item We show how to estimate the revenue curve for the underlying
value distribution from samples drawn from the bid distribution of a
mechanism in Section~\ref{MWN}. We give bounds on the error in
estimation in terms of the allocation rule of the mechanism, the
number of bid samples, and the measurement error in bids. These bounds
require the allocation function to have a minimum and a maximum slope
at all quantiles. An implication is that the revenue-optimal auction (with
a mostly ``flat'' allocation rule) is generally poor at inference and
requires substantially larger samples to achieve the inference
qualtify of the iron by rank auctions.

We show that the dependence of the estimation error on the
aforementioned parameters is polynomially worse in this setting as
compared to those for the problem of learning just the multi-unit
revenues. 
\end{itemize}


\paragraph{Related Work}

The inference approaches that we use in this paper are related to the
recent work in econometrics literature on the so-called plug-in
estimation of strategic response models.  For instance, \citet{guerre}
analyzes the estimation of the distribution of values from the
distribution of bids in the first-price auction. An overview of
related empirical models applied to other auction environments can be
found in \citet{paarsch}. The principle for inference in these models
is based on the two-step approach discussed in detail in
\citet{bajari}. The idea of the two step method is based on estimation
of the empirical distribution of equilibrium outcome directly from the
data in the first step. In the second step this empirical distribution
is plugged into the first-order condition for each player to obtain
this player's payoff (or value in case of auctions).  This literature,
however, does not consider the question of comparing different
mechanisms in terms of the possibility to infer the parameters of
another mechanism. The mechanism that generates the data is usually
taken as given. Such an analysis has, however, been considered in
non-strategic environments in \citet{chernozhukov}. The non-strategic
structure of the environment makes it significantly easier to analyze,
provided that one does not need to recover the best response
correspondence from the empirical observations, which is one of the
key components of our analysis.

In the mechanism design literature, the problem of designing
mechanisms to enable learning the parameters of a market has not been
considered from a theoretical perspective previously. Several works
have considered the problem of learning optimal pricing schemes in an
online setting (e.g., \citet{BDK+12}). However, these works assume
non-strategic behavior on part of the agents, which makes the
inference much simpler. Other works consider the problem of learning
click-through-rates in the context of a sponsored search auction (a
generalization of the position environment we study) while
simultaneously obtaining good revenue (e.g.,
\citet{DK-09,BSS-09,GLT-12}), however, they restrict attention to
truthful mechanisms, and again do not require inference.

Several works have considered the problem of empirically optimizing
the reserve price of an auction in an online repeated auction
setting (e.g., \citet{Ril-06,BM-09,OS-11}). The most notable of these is the work of
\citet{OS-11}. \citet{OS-11} adapt their mechanism over time to
respond to empirical data by determining the optimal reserve price for
the empirically observed distribution, and then setting a reserve
price that is slightly smaller. This allows for inference around the
optimal reserve price and ensures that the mechanism quickly adapts to
changes in the distribution.

Finally, the theory that we develop for optimizing revenue over the
class of iron by rank auctions is isomorphic to the theory of
envy-free optimal pricing developed by \citet{HY-11}.

\section{Preliminaries}\label{s:prelim}

%
%
%
%
%

\subsection{Auction Theory}

A standard auction design problem is defined by a set $[n] =
\{1,\ldots,n\}$ of $n\ge 2$ agents, each with a private value $\vali$
for receiving a service.  The values are bounded: $\vali\in[0,1]$;
They are independently and identically distributed according to a
continuous distribution $\dist$.  If $\alloci$ indicates the
probability of service and $\pricei$ the expected payment required,
agent $i$ has linear utility $\utili = \vali \alloci - \pricei$.  An
auction elicits bids $\bids = (\bidi[1],\ldots,\bidi[n])$ from the
agents and maps the vector $\bids$ of bids to an allocation
$\tallocs(\bids) = (\talloci[1](\bids),\ldots,\talloci[n](\bids))$,
specifying the probability with which each agent is served, and prices
$\tprices(\bids) = (\talloci[1](\bids),\ldots,\talloci[n](\bids))$,
specifying the expected amount that each agent is required to pay.  An
auction is denoted by $(\tallocs,\tprices)$.

\paragraph{Standard payment formats}

In this paper we study two standard payment formats. In a {\em
  first-price} format, each agent pays his bid upon winning, that is,
$\tpricei(\bids) = \bidi \, \talloci(\bids)$. In an {\em all-pay} format,
each agent pays his bid regardless of whether or not he wins, that is,
$\tpricei(\bids) = \bidi$.

\paragraph{Feasibility in position auction environments}

Auction designers face a feasibility constraint that restricts the
allocations $\allocs$ that the mechanism may produce. In this paper we
focus on position auction environments. In such environments, the
feasibility constraint is given by {\em position weights} $1 \ge
\walk[1]\ge \walk[2] \ge \cdots \ge \walk[n] \geq 0$. An allocation
function assigns agents (randomly) to positions $1$ through $n$, and
an agent assigned to position $i$ gets allocated $\walk[i]$. In other
words, an allocation $\allocs = (\alloci[1],\ldots,\alloci[n])$,
sorted so that $\alloc_1\ge\alloc_2\ge\cdots\ge\alloc_n$, is feasible
if and only if it can be obtained by shifting weight downward (or
discarding), i.e., for all $k\in [n]$, $\sum_{j\le k} \alloci[j] \le
\sum_{j\le k} \walk[j]$. 


\paragraph{Bayes-Nash equilibrium}
The values are independently and identically distributed according to
a continuous distribution $\dist$.  This distribution is common
knowledge to the agents.  A strategy $\strati$ for agent $i$ is a
function that maps the
value of the agent to a bid.  The distribution of values $\dist$ and a
profile of strategies $\strats = (\strati[1],\cdots,\strati[n])$
induces interim allocation and payment rules (as a function of bids)
as follows for agent $i$ with bid $\bidi$.
\begin{align*}
\talloci(\bidi) &= \expect[\valsmi \sim \dist]{\talloci(\bidi,\stratsmi(\valsmi))} \text{ and}\\
\tpricei(\bidi) &= \expect[\valsmi \sim \dist]{\tpricei(\bidi,\stratsmi(\valsmi))}.
\intertext{Agents have linear utility which can be expressed in the interm as:}
\tutili(\vali,\bidi) &= \vali\talloci(\bidi) - \tpricei(\bidi).
\end{align*}
The strategy profile forms a {\em Bayes-Nash equilibrium} (BNE) if for
all agents $i$, values $\vali$, and alternative bids $\bidi$, bidding
$\strati(\vali)$ according to the strategy profile is at least as good
as bidding $\bidi$.  I.e.,
\begin{align}
\label{eq:br}
\tutili(\vali,\strati(\vali)) &\geq \tutili(\vali,\bidi).
\end{align}

A symmetric equilibrium is one where all agents bid by the same
strategy, i.e., $\strats$ statisfies $\strati = \strat$ for some
$\strat$.  For a symmetric equilibrium, the interim allocation and
payment rules are also symmetric, i.e., $\talloci = \talloc$ and
$\strati = \strat$ for all $i$.  For implicit distribution $\dist$ and
symmetric equilibrium given by stratey $\strat$, a mechanism can be
described by the pair $(\talloc,\tprice)$.  \citet{CH-13} show that
the equilibrium of every auction in the class we consider is unique and
symmetric.

The strategy profile allows the mechanism's outcome rules to be
expressed in terms of the agents' values instead of their bids; the
distribution of values allows them to be expressed in terms of the
agents' values relative to the distribution.  This later
representation exposes the geometry of the mechanism.  Define the {\em
  quantile} $\quant$ of an agent with value $\val$ to be the
probability that $\val$ is larger than a random draw from the
distribution $\dist$, i.e., $\quant=\dist(\val)$.  Denote the agent's
value as a function of quantile as $\val(\quant) =
\dist^{-1}(\quant)$, and his bid as a function of quantile as
$\bid(\quant) = \strat(\val(\quant))$.  The outcome rule of the
mechanism in quantile space is the pair
$(\alloc(\quant),\price(\quant)) =
(\talloc(\bid(\quant)),\tprice(\bid(\quant)))$.

\paragraph{Revenue curves and auction revenue}

\citet{mye-81} characterized Bayes-Nash equilibria and this
characteriation enables writing the revenue of a mechanism as a
weighted sum of revenues of single-agent posted pricings.  Formally,
the {\em revenue curve} $\rev(\quant)$ for a given value distribution
specifies the revenue of the single-agent mechanism that serves an
agent with value drawn from that distribution if and only if the
agent's quantile exceeds $\quant$: $\rev(\quant) =
\val(\quant)\,(1-\quant)$. $\rev(0)$ and $\rev(1)$ are defined as
$0$. Myerson's characterization of BNE then implies that the expected
revenue of a mechanism at BNE from an agent facing an allocation rule
$\alloc(\quant)$ can be written as follows:
\begin{eqnarray}
  \label{eq:bne-rev}
  \REV{\alloc} = \expect[\quant]{\rev(\quant)\alloc'(\quant)} = -\expect[\quant]{\rev'(\quant)\alloc(\quant)} 
\end{eqnarray}
where $\alloc'$ and  $\rev'$ denote the derivative of $\alloc$ and $\rev$ with respect to $\quant$, respectively.

The expected revenue of an auction is the sum over the agents of its
per-agent expected revenue; for auctions with symmetric equilibrium
allocaton rule $\alloc$ this revenue is $n \, \REV{\alloc}$.

\paragraph{Rank-based auctions}

In a rank-based auction, the allocation to an agent depends solely on
the rank of his bid among the other agents' bids, and not on the
actual bid. For example, a $k$-highest-bids-win auction is a
rank-based auction, however, a
$k$-highest-bids-win-subject-to-reserve-bid-$r$ auction is not a
rank-based auction.

\subsection{Inference}

The distribution of values, which is unobserved, can be inferred from
the distribution of bids, which is observed.  Once the value
distribution is inferred, other properties of the value distribution
such as its corresponding revenue curve, which is fundamental for
optimizing revenue, can be obtained.  In this section we describe the
basic premise of the inference assuming that the distribution of bids
known exactly.

The key idea behind the inference of the value distribution from the
bid distribution is that the strategy which maps values to bids is a
best response, by equation~\eqref{eq:br}, to the distribution of bids.
As the distribution of bids is observed, and given suitable continuity
assumptions, this best response function can be inverted.

The value distribution can be equivalently specified by distribution
function $\dist(\cdot)$ or value function $\val(\cdot)$; the bid
distribution can similarly be specified by the bid function
$\bid(\cdot)$.  For rank-based auctions (as considered by this paper)
the allocation rule $\alloc(\cdot)$ in quantile space is known
precisely (i.e. it does not depend on the value function
$\val(\cdot)$).  Assume these functions are monotone, continuously
differentiable, and invertible.

\paragraph{Inference for first-price auctions}
Consider a first-price rank-based auction with a symmetric bid
function $\bid(\quant)$ and allocation rule $\alloc(\quant)$ in BNE.
To invert the bid function we solve for the bid that the agent with
any quantile would make.  Continuity of this bid function implies that
its inverse is well defined.  Applying this inverse to the bid
distribution gives the value distribution.

The utility of an agent with quantile $\quant$ as a function of his bid $z$
is
\begin{align*}
\yestag\label{eq:fp-util}
\util(\quant,z) &= (\val(\quant) - z) \, \alloc(\bid^{-1}(z)).\\ 
\intertext{Differentiating with respect to $z$ we get:} 
\tfrac{d}{dz}\util(\quant,z) &= -\alloc(\bid^{-1}(z)) +
\big(\val(\quant) - z\big) \, \alloc'(\bid^{-1}(z))\,
\tfrac{d}{dz}\bid^{-1}(z).\\ 
\intertext{Here $\alloc'$ is the
  derivative of $\alloc$ with respect to the quantile $q$. Because
  $\bid(\cdot)$ is in BNE, the derivative $\tfrac{d}{dz}\util(z,\quant)$ is $0$ at
  $z=\bid(\quant)$. Rarranging, we obtain:}
  \yestag
  \label{eq:fp-inf}
  \val(\quant) &= \bid(\quant) + \tfrac{\alloc(\quant)\,\bid'(\quant)}{\alloc'(\quant)}
\end{align*}

\paragraph{Inference for all-pay auctions}
We repeat the calculation above for rank-based all-pay auctions; the starting
equation \eqref{eq:fp-util} is replaced with the analogous equation for all-pay auctions:
\begin{align*}
\yestag\label{eq:ap-util}
\util(\quant,z) &= \val(\quant)\,\alloc(\bid^{-1}(z)) - z.
\intertext{Differentiating with respect to $z$ we obtain:}
\tfrac{d}{dz}\util(\quant,z) &= \val(\quant)\,\alloc'(\bid^{-1}(z)) \,\frac{d}{dz}\bid^{-1}(z) - 1,\\
\intertext{Again the first-order condition of BNE implies that this expression is zero at $z = \bid(\quant)$; therefore,}
\yestag  \label{eq:ap-inf}
  \val(\quant) & = \tfrac{\bid'(\quant)}{\alloc'(\quant)}.
\end{align*}

\paragraph{Known and observed quantities}
Recall that the functions $\alloc(\quant)$ and $\alloc'(\quant)$ are
known precisely: these are determined by the rank-based auction
definition.  The functions $\bid(\quant)$ and $\bid'(\quant)$ are
observed.  The calculations above hold in the limit as the number of
samples from the bid distribution goes to infinity, at which point these
obserations are precise.


\subsection{Statistical Model and Methods}

In this section we describe the errors in the estimated bid
distribution and standard analyses for rates of convergence.  The
main error in estimation of the bid distribution is the {\em sampling
  error} due to drawing only a finite number of samples from
the bid distribution. 

The analyst obtains $\samples$ samples from the bid distribution. Each
sample is the corresponding agent's best response to the {\em true}
bid distribution. 
We assume that the number of samples $\samples$
is roughly polynomial in $n$, the number of agents in a single auction.

We can estimate the equilibrium bid distribution $\bid(\quant)$ 
as follows.  Let $\sbidi[1],\cdots,\sbidi[\samples]$ denote the
$\samples$ samples drawn from the bid distribution. 
Sort the bids so that $\sbidi[1] \leq \sbidi[2] \leq \cdots \leq
\sbidi[\samples]$ and define the {\em estimated bid distribution}
$\ebid(\cdot)$ as
\begin{align}\label{bid function}
\ebid(\quant) &= \sbidi &\forall i \in \samples, \quant \in [i-1,i)/\samples
\end{align}


\begin{definition}
For function $\bid(\cdot)$ and estimator $\ebid(\cdot)$, the {\em mean
  squared error} as a function of the number of samples $\samples$ is $$\MSEN[\bid] = \expect{ \sup\nolimits_\quant
  \big| \bid(\quant) - \ebid(\quant) \big|^2}^{1/2}.$$
The {\em rate of convergence} of an unbiased estimator, $\rateN$,
captures the dependence of the mean squared error in terms of the
number of samples $\samples$, keeping all other quantities (including,
e.g., $n$) constant, that is, $\rateN \MSEN = \Theta(1)$.
\end{definition}

We will be interested in comparing the error bounds achieved by
different algorithms for inference. Accordingly, we will state these
bounds in terms of the rate of convergence. We will also state
bounds on the mean squared error of the quantities we estimate, in
terms of the mechanism that generates the bids, in order to optimize
for the mechanism. 

\begin{lemma}
\label{error bid function}
The estimator $\ebid(\cdot)$ defined directly from the bids by
equation \eqref{bid function} 
has a rate of convergence of $\rateN = \sqrt N$ and mean squared error
$\MSEN[\bid] = O(\sup_qb'(q)/\sqrt{2N})$. Recalling that $\val(q)\le
1$ for all quantiles $q$, we obtain the following expressions for mean
squared error in terms of the allocation function.
\begin{itemize}
\item[(i)] If the samples are generated from the first-price auction, then 
the mean squared error can be evaluated as $\MSEN[\bid] =
O(\sup_q \left\{\frac{x'(q)}{x(q)}\right\}/\sqrt{2N})$.
\item[(ii)] If the samples are generated from the all-pay auction, then 
the mean squared error can be evaluated as $\MSEN =
O(\sup_qx'(q)/\sqrt{2N})$.
\end{itemize}
\end{lemma}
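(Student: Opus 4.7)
The plan is to interpret $\ebid(\cdot)$ as the quantile function of the empirical bid CDF and then transport Kolmogorov--Smirnov-type uniform bounds on order statistics back through the smooth bid function $\bid(\cdot)$.

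\medskip

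\noindent\textbf{Step 1 (reduce to uniform order statistics).} Since the bid function $\bid(\cdot)$ is monotone, each sampled bid can be written $\sbidi = \bid(U_i)$ where $U_1,\dots,U_\samples$ are i.i.d.\ uniform on $[0,1]$ (the quantiles at which the bidders' values fall). Sorting gives $\sbidi[(i)] = \bid(U_{(i)})$ where $U_{(1)}\le\cdots\le U_{(\samples)}$ is the uniform order statistic. Hence for $\quant\in[(i-1)/\samples,i/\samples)$, the estimator satisfies $\ebid(\quant)=\bid(U_{(i)})$.

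\medskip

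\noindent\textbf{Step 2 (transfer the error through $\bid$).} Applying the mean value theorem to $\bid$,
\[
\bigl|\bid(\quant)-\ebid(\quant)\bigr|
\;\le\; \sup_{\quant}\bid'(\quant)\cdot \bigl|\quant-U_{(i)}\bigr|
\;\le\; \sup_{\quant}\bid'(\quant)\cdot\Bigl(\tfrac{1}{\samples}+\bigl|U_{(i)}-i/\samples\bigr|\Bigr).
\]
Taking the supremum over $\quant\in[0,1]$ bounds $\sup_{\quant}|\bid(\quant)-\ebid(\quant)|$ by $\sup_{\quant}\bid'(\quant)$ times $1/\samples$ plus the Kolmogorov--Smirnov statistic $D_\samples=\sup_t|\hat F_U(t)-t|$ of the empirical uniform CDF.

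\medskip

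\noindent\textbf{Step 3 (control $D_\samples$ in $L^2$).} The DKW inequality gives $\Pr[D_\samples>\eps]\le 2e^{-2\samples\eps^2}$, from which $\expect{D_\samples^2}=O(1/\samples)$ with the leading constant $1/(2\samples)$; this is the source of the $\sqrt{2\samples}$ in the stated bound. Combining Steps 2 and 3,
\[
\MSEN[\bid]\;=\;\expect{\sup\nolimits_\quant|\bid(\quant)-\ebid(\quant)|^2}^{1/2}
\;=\;O\!\left(\sup\nolimits_\quant\bid'(\quant)\big/\sqrt{2\samples}\right),
\]
which proves the general rate $\rateN=\sqrt{\samples}$.

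\medskip

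\noindent\textbf{Step 4 (specialize to each format).} It remains to bound $\sup_\quant\bid'(\quant)$ by the stated expressions. For the first-price auction, the inference identity \eqref{eq:fp-inf} rearranges to
\[
\bid'(\quant)\;=\;\bigl(\val(\quant)-\bid(\quant)\bigr)\,\frac{\alloc'(\quant)}{\alloc(\quant)}
\;\le\;\val(\quant)\,\frac{\alloc'(\quant)}{\alloc(\quant)}
\;\le\;\frac{\alloc'(\quant)}{\alloc(\quant)},
\]
using $\bid(\quant)\ge 0$ and $\val(\quant)\le 1$; this yields (i). For the all-pay auction, identity \eqref{eq:ap-inf} gives $\bid'(\quant)=\val(\quant)\,\alloc'(\quant)\le\alloc'(\quant)$, yielding (ii).

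\medskip

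\noindent\textbf{Main obstacle.} The subtlety is that the MSE is defined with a supremum inside the expectation, so pointwise variance bounds on individual order statistics (which only give an $i$-dependent $1/\sqrt{\samples}$) are insufficient; one needs a uniform bound on $\max_i|U_{(i)}-i/\samples|$, which is exactly what DKW provides. Once that uniform bound is in hand, the rest of the argument is calculus.
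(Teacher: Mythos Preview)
Your proof is correct and follows the same overall skeleton as the paper's: relate the quantile-function error $\ebid(\quant)-\bid(\quant)$ to a uniform empirical-process error, pull out the factor $\sup_\quant \bid'(\quant)$, and then bound $\bid'(\quant)$ via the first-order conditions \eqref{eq:fp-inf} and \eqref{eq:ap-inf}. The technical machinery differs: you use the probability integral transform $\sbidi=\bid(U_i)$, the mean value theorem on $\bid$, and the DKW inequality on the uniform order statistics, whereas the paper works on the bid scale, writes $\widehat G(\ebid(\quant))=\quant+o_p(1/\sqrt{\samples})$, applies a Taylor/delta-method expansion of $G$ to linearize (so that $\ebid(\quant)-\bid(\quant)\approx -\bid'(\quant)\,(\widehat G-G)(\ebid(\quant))$), and invokes Donsker's theorem together with the pointwise variance bound $G(t)(1-G(t))\le 1/4$ to get the $1/\sqrt{2\samples}$ constant. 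Your route is more elementary and gives a genuine finite-sample bound; the paper's route is the standard asymptotic quantile-process argument and makes the source of the constant more explicit. Step~4 is identical to the paper's final paragraph.
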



To estimate the value distribution, as is evident from
equations~\eqref{eq:fp-inf} and \eqref{eq:ap-inf}, an estimator for the
derivative of the bid function, or equivalently, for the
density of the bid distribution, is needed.  Estimation of densities is standard;
however, they require assumptions on the distribution, e.g.,
continuity, and the convergence rates are strictly slower.  Our main
results do not need to explicitly estimate the value distribution and
therefore, we defer these standard methods to Section~\ref{MWN} where
our technique is compared with the straightforward approach of
estimating the value distribution explicitly.

\section{Rank-based auctions}
\label{s:iron-rank}

One of the main contributions of this paper is to introduce a
restricted class of rank-based auctions for position environments that
simultaneously have good performance and good econometric properties.
Recall that in a rank-based auction the allocation to an agent depends solely on
the rank of his bid among other agents' bids, and not on the actual
bid. For a position environment, a rank-based auction assigns agents
(potentially randomly) to positions based on their ranks.

Consider a position environment given by non-increasing weights $\wals
= (\walk[1],\ldots,\walk[n]$).  For notational convenience, define
$\walk[n+1] = 0$.  Define the cumulative position
weights $\cumwals = (\cumwalk[1],\ldots,\cumwalk[n])$ as $\cumwalk =
\sum_{j=1}^k \walk[j]$, and $\cumwalk[0]=0$. We can view the cumulative weights as defining
a piece-wise linear, monotone, concave function given by connecting
the point set $(0,\cumwalk[0]), \ldots, (n,\cumwalk[n])$.

A random assignment of agents to positions based on their ranks induces an expected weight to which agents of each rank are assigned, e.g., $\iwalk$ for the $k$th ranked agent.  These expected weights can be interpreted as a position auction environment themselves with weights $\iwals$. As for the original weights, we can define the cumulative position weights $\cumiwals$ as $\cumiwalk = \sum_{j=1}^k \walk[j]$. An important issue for optimization of
rank-based auctions is to characterize the inducible class of position weights.

\begin{lemma}[e.g., \citealp{DHH-13}]
\label{l:rank-based-feasibility}
There is a rank-based auction with induced position weights $\iwals$
for position environment with weights $\wals$ if and only if their
cumulative weights satisfy $\cumiwalk \leq \cumwalk$ for all $k$,
denoted $\cumiwals \leq \cumwals$.
\end{lemma}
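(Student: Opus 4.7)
The plan is to prove both directions using majorization theory, since the condition $\cumiwals \leq \cumwals$ (with both sequences non-increasing and non-negative) is exactly the statement that $\iwals$ is weakly majorized by $\wals$. The forward direction is a direct averaging argument, and the reverse direction follows from Hardy-Littlewood-Polya together with a Birkhoff-von Neumann style decomposition.

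For the forward direction, encode any rank-based auction by a matrix $P = (P_{k,j})$, where $P_{k,j}$ is the probability that the rank-$k$ agent is assigned to position $j$. Row sums are at most $1$ (a rank may be discarded) and column sums are at most $1$ (each position is used at most once per realization), so $P$ is doubly sub-stochastic, and $\iwalk = \sum_j P_{k,j}\,\walk[j]$. Now fix any realization of the assignment: the top $k$ ranks occupy at most $k$ distinct positions, and since $\wals$ is non-increasing, the sum of weights at those positions is at most $\cumwalk$. Averaging over the randomness of $P$ yields $\cumiwalk \leq \cumwalk$.

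For the reverse direction, I would invoke the Hardy-Littlewood-Polya theorem in its weak-majorization form: given non-negative, non-increasing $\iwals$ and $\wals$ with $\cumiwals \leq \cumwals$, there exists a doubly sub-stochastic matrix $P$ with $\iwals = P\wals$. Then apply the Birkhoff-von Neumann theorem for doubly sub-stochastic matrices to write $P$ as a convex combination of $0$-$1$ sub-permutation matrices, each of which specifies a deterministic partial rank-to-position assignment (with some ranks possibly left unassigned, i.e., discarded). Sampling an assignment from this convex combination with the associated probabilities yields a rank-based auction whose induced weights are exactly $\iwals$ in expectation.

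The main technical obstacle is handling the ``discarding'' freedom cleanly, since the classical Hardy-Littlewood-Polya and Birkhoff statements are usually phrased for doubly stochastic (equality-constrained) matrices rather than sub-stochastic ones. If one prefers to avoid the sub-stochastic generalizations directly, the reverse direction can be reduced to the standard stochastic case by padding $\wals$ with sufficiently many zero-weight positions and padding $\iwals$ with the same number of small entries summing to the deficit $\cumwalk[n] - \cumiwalk[n]$ (taking them small enough to preserve monotonicity), so that the augmented sequences satisfy ordinary majorization with equality at the last index, after which the classical theorems apply and the padding can be read off as additional discarding.
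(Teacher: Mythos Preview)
Your argument is correct. The paper does not actually supply a proof of this lemma; it states it with a citation to \citet{DHH-13} and, a bit later, points to \citet{HLP-29} and \citet{AFHHM-12} for the constructive direction (realizing a feasible $\iwals$ by a sequence of iron-by-rank and rank-reserve operations). Your majorization route---doubly sub-stochastic encoding for necessity, Hardy--Littlewood--P\'olya plus a Birkhoff--von Neumann decomposition into sub-permutation matrices for sufficiency---is precisely the standard argument underlying those citations, so there is no discrepancy to report.

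One small remark on presentation: your padding reduction to the equality-constrained (doubly stochastic) case is fine, but you can also skip it by noting that a doubly sub-stochastic $n\times n$ matrix embeds as the top-left block of a $2n\times 2n$ doubly stochastic matrix (the standard ``completion'' trick), after which the ordinary Birkhoff theorem applies directly and the extra $n$ positions are read as discard slots with weight zero. Either way the conclusion is the same.
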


Any feasible weights $\iwals$ can be constructed from a sequence of
the following two operations.
\begin{description}
\item[rank reserve] For a given rank $k$, all agents with ranks
  between $k+1$ and $n$ are rejected.  The resulting weights $\iwals$
  are equal to $\wals$ except $\iwalk[k'] = 0$ for $k' > k$.
 
\item[iron by rank] Given ranks $k' < k''$, the ironing-by-rank
  operation corresponds to, when agents are ranked, assigning the
  agents ranked in an interval $\{k',\ldots,k''\}$ uniformly at
  random to these same positions.  The ironed position weights
  $\iwals$ are equal to $\wals$ except the weights on the ironed
  interval of positions are averaged.  The cumulative ironed position
  weights $\cumiwals$ are equal to $\cumwals$ (viewed as a concave
  function) except that a straight line connects
  $(k'-1,\cumiwalk[k'-1])$ to $(k'',\cumiwalk[k''])$.  Notice that
  concavity of $\cumwals$ (as a function) and this perspective of the
  ironing procedure as replacing an interval with a line segment
  connecting the endpoints of the interval implies that $\cumwals \geq
  \cumiwals$ coordinate-wise, i.e., $\cumwalk \geq \cumiwalk$ for all
  $k$.
\end{description}

Multi-unit highest-bids-win auctions form a basis for position auctions. Consider the
marginal position weights $\margwals =
(\margwalk[1],\ldots,\margwalk[n])$ defined by $\margwalk = \walk -
\walk[k+1]$.  The allocation rule induced by the position auction
with weights $\wals$ is identical to the allocation rule induced by
the convex combination of multi-unit auctions where the $k$-unit
auction is run with probability $\margwalk$.


In this section we develop a theory for optimizing revenue over the
class of all rank-based auctions that resembles Myerson's theory for
optimal auction design.  Where Myerson's theory employs ironing by
value and value reserves, our approach analogously employs ironing by
rank and rank reserves.  We then show that the revenue of rank-based
auctions is close to optimal for position environments.  Our
econometric study of rank-based auctions is deferred to
Section~\ref{s:param-inf}.

\subsection{Optimal rank-based auctions}

In this section we describe how to optimize for expected revenue over
the class of iron by rank auctions. Recall that iron by rank auctions
are linear combinations over $k$-unit auctions. The characterization
of Bayes-Nash equilibrium, cf.\@ equation~\eqref{eq:bne-rev}, shows
that revenue is a linear function of the allocation rule. Therefore,
the revenue of a position auction can be calculated as the convex
combination of the revenues from the $k$-unit highest-bids-win
auctions.


The revenue from a $k$-unit $n$-agent highest-bids-win auction with
agent values drawn i.i.d.\@ from distribution $\dist$ can be
calculated in terms of the the agents revenue curve $\rev(\quant)$ and
the allocation rule $\knalloc(\quant)$ of the $k$-highest-bids-win
auction (for $n$ agents).  This allocation rule is precisely the
probability an agent with quantile $\quant$ has one of the highest $k$
quantiles of $n$ agents, or at most $k-1$ of the $n-1$ remaining
agents have quantiles greater than $\quant$.
$$
\knalloc(\quant) = \sum_{i=0}^{k-1} \tbinom{n-1}{i} \quant^{n-1-i}(1-\quant)^{i}.
$$ Notice that $\knalloc[0](\quant) = 0$ and $\knalloc[n](\quant) =
1$.  The {\em per-agent} revenue obtained is $\murevk =
\expect[\quant]{\marg(\quant)\,\knalloc(\quant)}$.  Notice that
$\murevk[0] = \murevk[n] = 0$.


Given the multi-unit revenues, $\murevs =
(\murevk[0],\ldots,\murevk[n])$, the problem of designing the optimal
rank-based auction is well defined: given a position environment
with weights $\wals$, find the weights $\iwals$ for an rank-based
auction with cummulative weights $\cumiwals \leq \cumwals$ maximizing
the sum $\sum_{k} (\iwalk-\iwalk[k+1])P_k$.  This optimization problem
is isomorphic to the theory of envy-free optimal pricing
developed by \citet{HY-11}.  We summarize this theory below; a
complete derivation can be found in Appendix~\ref{s:iron-opt-app}.

Define the {\em multi-unit revenue curve} as the piece-wise linear
function connecting the points $(0,\murevk[0]),\ldots,(n,\murevk[n])$.
This function may or may not be concave.  Define the {\em ironed
  multi-unit revenues} as $\imurevs =
(\imurevk[0],\ldots,\imurevk[n])$ according to the smallest concave
function that upper bounds the multi-unit revenue curve.  Define the
multi-unit marginal revenues, $\mumargs =
\mumargk[1],\ldots,\mumargk[n]$ and $\imumargs =
\imumargk[1],\ldots,\imumargk[n]$, as the left slope of the multi-unit
and ironed multi-unit revenue curves, respectively.  I.e., $\mumargk =
\murevk - \murevk[k-1]$ and $\imumargk = \imurevk - \imurevk[k-1]$.

\begin{theorem}
\label{thm:rank-based-opt}
Given a position environment with weights $\wals$, the revenue-optimal
iron-by-rank auction is defined by position weights $\iwals$ that are
equal to $\wals$, except ironed on the same intervals as $\murevs$ is
ironed to obtain $\imurevs$, and positions $k$ for which $\imumargk$ is
negative are discarded (by setting $\iwalk = 0$).
\end{theorem}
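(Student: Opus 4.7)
The plan is to follow Myerson's template, combining two applications of Abel summation with an ironing inequality. First, rewrite the revenue $\REV{\iwals}$ of any iron-by-rank auction with induced weights $\iwals$ as a linear functional in $\iwals$: using $\iwalk[n+1] = 0$ and $\murevk[0] = 0$, Abel summation converts $\sum_{k} (\iwalk - \iwalk[k+1]) \murevk$ into $\sum_k \iwalk \, \mumargk$, so the revenue of a rank-based auction is exhibited as a linear combination of the $\iwalk$'s weighted by the multi-unit marginal revenues.

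Second, apply the ironing inequality. Since any feasible $\iwals$ is non-increasing and $\imurevk \ge \murevk$ pointwise by definition of the smallest concave upper hull, a reverse Abel summation on the ironed objective gives
\[
\sum_k \iwalk \, \imumargk \;=\; \sum_k \imurevk (\iwalk - \iwalk[k+1]) \;\ge\; \sum_k \murevk (\iwalk - \iwalk[k+1]) \;=\; \REV{\iwals},
\]
with equality iff $\iwalk = \iwalk[k+1]$ at every $k$ where $\imurevk > \murevk$, i.e., iff $\iwals$ is constant on every ironed interval of $\murevs$.

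Third, optimize the upper bound. A further Abel summation yields
\[
\sum_k \iwalk \, \imumargk \;=\; \sum_{k=1}^{n-1} \cumiwalk (\imumargk - \imumargk[k+1]) \;+\; \cumiwalk[n] \imumargk[n],
\]
in which every difference $\imumargk - \imumargk[k+1]$ is nonnegative by concavity of $\imurevs$. The objective is therefore maximized by pushing $\cumiwalk$ to its feasibility bound $\cumwalk$ wherever the coefficient is strictly positive, and holding $\cumiwalk$ constant (i.e., $\iwalk = 0$) past the last rank with $\imumargk \ge 0$. The auction described in the theorem does exactly this: (i) averaging $\wals$ on each ironed interval $[k',k'']$ keeps $\cumiwalk[k''] = \cumwalk[k'']$ at the right endpoint and loses nothing in the interior because $\imumargk$ is constant there so $\imumargk - \imumargk[k+1] = 0$; (ii) zeroing ranks with $\imumargk < 0$ prevents the boundary term $\cumiwalk[n]\imumargk[n]$ from dragging the objective downward. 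Chaining the two inequalities then gives $\REV{\iwals} \le \REV{\iwals^\star}$ for every feasible $\iwals$.

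The main obstacle is bookkeeping around the feasibility of the proposed optimum $\iwals^\star$: one must check that the averaged weights remain non-increasing and satisfy $\cumiwals^\star \le \cumwals$ coordinatewise. Monotonicity at the interface of an ironed interval with adjacent positions follows from $\wals$ itself being non-increasing together with ironed intervals being maximal (consecutive ironed intervals cannot abut, else they would be merged into one), while $\cumiwalk^\star \le \cumwalk$ inside an ironed interval follows from concavity of $\cumwals$ as a function of $k$ (the candidate's partial sums trace the chord joining $(k'-1,\cumwalk[k'-1])$ to $(k'',\cumwalk[k''])$, which lies weakly below $\cumwals$). A shorter alternative is to invoke the isomorphism with the envy-free optimal pricing problem of \citet{HY-11} flagged in the related-work section and directly translate their characterization of the optimum.
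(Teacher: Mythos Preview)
Your proposal is correct and follows essentially the same Myerson-style ironing argument as the paper's proof in Appendix~\ref{s:iron-opt-app}: Abel summation to pass from $\murevk$ to $\mumargk$, the ironing inequality $\murevk \le \imurevk$ to obtain an upper bound with equality exactly when weights are flat on ironed intervals, and then maximization of the ironed objective via greedy assignment with a rank reserve. Your third Abel summation (to $\cumiwalk$) makes explicit what the paper handles with a one-line ``greedy with positions corresponding to ranks with negative ironed marginal revenue discarded'' argument, and your feasibility check for $\iwals^\star$ spells out what the paper absorbs into the earlier description of the iron-by-rank operation; but these are expository differences, not a different route.
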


As is evident from this description of the optimal iron-by-rank
auction, the only quantities that need to be ascertained to run this
auction is the multi-unit revenue curve defined by $\murevs$.
Therefore, an econometric analysis for optimizing iron-by-rank
auctions need not estimate the entire value distribution; estimation
of the multi-unit revenues is sufficient.

\subsection{Optimal rank-based auctions with strict monotonicity}
\label{sec:strict-opt}

Position auctions, by definition, have non-increasing position weights
$\wals$.  The ironing in the iron-by-rank optimization of the
preceding section was to convert the problem of optimizing multi-unit
marginal revenue subject to non-increasing position weight, to a
simpler problem of optimizing multi-unit marginal revenue without any
constraints.  In this section, we describe the optimization of
rank-based auctions (i.e., ones for which position weights can be
shifted only downwards or discarded) subject to {\em strictly decreasing}
position weights.  This strictness is needed for insuring good
inference properties, the details of which are formalized in
Section~\ref{s:param-inf}.

As described by Lemma~\ref{l:rank-based-feasibility}, position weights
$\iwals$ are feasible as a rank-based auction in position environment
$\wals$ if the cumulative position weights satisfy $\cumwalk \geq
\cumiwalk$ for all $k$.  Suppose we would like to optimize $\iwals$
subject to a strict monotonicity constraint $\margiwalk = \iwalk -
\iwalk[k+1] \geq \epsilon$.  As non-trivial ironing by rank always
results in consecutive positions with the same weight, i.e.,
$\margiwalk = 0$ for some $k$, the optimal rank-based mechanism will
require overlapping ironed intervals.

To our knowledge, performance optimization subject to a strict
monotonicity constraint has not previously been considered in the
literature.  At a high level our approach is the following.  We start
with $\wals$ which induces the cumulative position weights $\cumwals$
which constrain the resulting position weights $\iwals$ (via its
cumulative $\cumiwals$) of any feasible rank-based auction.  We view
$\iwals$ as the combination of two position auctions. The first has
weakly monotone weights $\iyals = (\iyalk[1],\ldots,\iyalk[n])$; the
second has strictly monotone weights $((n-1)\epsilon, (n-2)\epsilon,
\ldots, \epsilon, 0)$; and the combination has weights $\iwalk =
\iyalk + (n-k)\epsilon$ for all $k$.  The revenue of the combined
position auction is the sum of the revenues of the two component position
auctions.  Since the second auction has fixed position weights, its
revenue is fixed.  Since the first position auction is weakly monotone
and the second is strictly, the combined position auction is strictly
monotone and satisfies the constraint that $\margiwalk \geq \epsilon$
for all $k$.

This construction focuses attention on optimization of $\iyals$
subject to the induced constraint imposed by $\wals$ and after the
removal of the $\epsilon$ strictly-monotone allocation rule.  I.e.,
$\iwals$ must be feasible for $\wals$.  The suggested feasibility
constraint for optimization of $\iyals$ is given by position weights
$\yals$ defined as $\yalk = \walk - (n-k) \epsilon$.  Notice that, in
this definition of $\yals$, a lesser amount is subtracted from
successive positions.  Consequently, monotonicity of $\wals$ does
not imply monotonicity of $\yals$.

To obtain $\iyals$ from $\yals$ we may need to iron for two reasons,
(a) to make $\iyals$ monotone and (b) to make the multi-unit revenue
curve monotone.  In fact, both of these ironings are good for revenue.
The ironing construction for monotonizing $\yals$ constructs the
concave hull of the cumulative position weights $\cumyals$.  This
concave hull is strictly higher than the curve given by $\cumyals$ (i.e.,
connecting $(0,\cumyalk[0]), \ldots, (n,\cumyalk[n])$).  Similarly the
ironed multi-unit revenue curve given by $\imurevs$ is the concave
hull of the multi-unit revenue curve given by $\murevs$.  The correct
order in which to apply these ironing procedures is to first (a) iron
the position weights $\yals$ to make it monotone, and second (b) iron
the multi-unit revenue curve $\murevs$ to make it concave.  This order
is important as the revenue of the position auction with weights
$\iyals$ is only given by the ironed revenue curve $\imurevs$ when the
$\margiyals = 0$ on the ironed intervals of $\imurevs$.

\begin{theorem}
\label{thm:rank-based-opt-strict}
The optimal $\epsilon$ strictly-monotone rank-based auction for position
weights $\wals$ has position weights $\iwals$ constructed by
\begin{enumerate}
\item defining $\yals$ by $\yalk = \walk - (n-k)\epsilon$ for all $k$.
\item averaging position weights of $\yals$ on intervals where $\yals$
  should be ironed to be monotone.
\item averaging the resulting position weights on intervals where
  $\murevs$ should be ironed to be concave to get $\iyals$
\item setting $\iwals$ as $\iwalk = \iyalk + (n-1) \epsilon$.
\end{enumerate}
\end{theorem}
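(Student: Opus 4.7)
\begin{proofsketch}
The plan is to reduce the $\epsilon$-strictly-monotone optimization to the weakly-monotone optimization covered by Theorem~\ref{thm:rank-based-opt} via a linear change of variables, handle the resulting non-monotonicity of the ``environment'' with one round of ironing, and then invoke Theorem~\ref{thm:rank-based-opt} on the monotone residual.

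\emph{Change of variables and revenue decomposition.} The map $\iwalk = \iyalk + (n-k)\epsilon$ is a bijection between $\epsilon$-strictly monotone sequences $\iwals$ and nonnegative weakly monotone sequences $\iyals$: $\iyalk - \iyalk[k+1] = (\iwalk - \iwalk[k+1]) - \epsilon \ge 0$, and a downward induction starting from $\iyalk[n] = \iwalk[n] \ge 0$ shows nonnegativity. Telescoping, $\cumiwals\le\cumwals$ is equivalent to $\cumiyals\le\cumyals$ with $\yalk=\walk-(n-k)\epsilon$. By linearity of revenue in the position weights (equation~\eqref{eq:bne-rev} expanded in the multi-unit auction basis),
\[
\REV{\iwals} \;=\; \REV{\iyals} \;+\; \epsilon \sum_{k=1}^{n}\murevk,
\]
so the second term is a fixed constant and the problem reduces to maximizing $\sum_k\margiyalk\murevk$ over nonnegative weakly monotone $\iyals$ with $\cumiyals\le\cumyals$.

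\emph{Two-stage ironing.} The new wrinkle relative to Theorem~\ref{thm:rank-based-opt} is that $\yals$ itself may fail to be monotone wherever $\walk-\walk[k+1]<\epsilon$, so $\cumyals$ need not be concave. First iron $\yals$ on its maximal non-monotone intervals to obtain a monotone environment $\yals^{\flat}$, and argue by a standard exchange argument in the style of \citet{HY-11} that this replacement does not change the optimum value: any weakly monotone $\iyals$ in the relaxed feasible region can be averaged over each such interval without decreasing $\sum_k\margiyalk\murevk$ and without violating the original constraint $\cumiyals\le\cumyals$. Now apply Theorem~\ref{thm:rank-based-opt} in the monotone environment $\yals^{\flat}$: iron on the intervals where $\murevs$ is ironed into $\imurevs$, and discard positions with negative $\imumargk$. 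Inverting the change of variables yields the $\iwals$ described in the theorem.

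\emph{Main obstacle.} The key technical difficulty is justifying the order of the two ironings. The identity $\REV{\iyals}=\sum_k\margiyalk\imurevk$ used in the second ironing requires $\margiyalk$ to be constant across every $\imurevs$-ironed interval; if one ironed by $\murevs$ first, a subsequent monotonization of $\yals$ could redistribute mass within a revenue-ironed interval and invalidate this identity. Verifying that the stated order preserves flatness on both types of ironed intervals simultaneously, while maintaining feasibility $\cumiyals\le\cumyals$ in the original (non-relaxed) sense on the overlapping intervals, is the heart of the argument and where the proof must most carefully marry position-weight feasibility with the revenue-curve ironing framework of \citet{HY-11}.
\end{proofsketch}
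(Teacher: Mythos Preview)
Your approach is essentially the paper's own: the paper's formal proof is the single line ``follows directly by the construction and its correctness,'' and the construction together with its justification is exactly the change-of-variables $\iwalk=\iyalk+(n-k)\epsilon$, the revenue split into a fixed part plus $\REV{\iyals}$, and the two-stage ironing (monotonize $\yals$, then iron by $\murevs$) with the order dictated by the need for $\margiyalk=0$ on $\imurevs$-ironed intervals---all of which is laid out in the discussion immediately preceding the theorem. Your sketch simply fleshes out that discussion (and quietly corrects the $(n-1)\epsilon$ typo in step~4 to $(n-k)\epsilon$), so there is nothing materially different to compare.
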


\begin{proof} 
The proof of this theorem follows directly by the construction and its
correctness.
\end{proof}

The rank-based auction given by $\iwals$ in position environment given
by $\wals$ can be implemented by a sequence of iron-by-rank and
rank-reserve operations.  Such a sequence of operations can be found,
e.g., via an approach of \citet{AFHHM-12} or \citet{HLP-29}.

\subsection{Approximation via rank-based auctions}

In this section we show that the revenue of optimal rank-based auction
approximates the optimal revenue (over all auctions) for position
environments.  Instead of making this performance directly we will
instead identify a simple non-optimal rank-based auction that
approximates the optimal auction.  Of course the optimal rank-based
auction of Theorem~\ref{thm:rank-based-opt} has revenue at least that
of this simple rank-based auction, thus its revenue also satisfies the
same approximation bound.

Our approach is as follows.  Just as arbitrary rank-based mechanisms
can be written as convex combinations over $k$-unit highest-bids-win
auctions, the optimal auction can be written as a convex combination
over optimal $k$-unit auctions.  We begin by showing that the revenue
of optimal $k$-unit auctions can be approximated by multi-unit
highest-bids-win auctions when the agents' values are distributed
according to a regular distribution (Lemma~\ref{lem:approx-regular},
below). In the irregular case, on the other hand, rank-based auctions
cannot compete against arbitrary optimal auctions. For example, if the
agents' value distribution contains a very high value with probability
$o(1/n)$, then an optimal auction may exploit that high value by
setting a reserve price equal to that value; On the other hand, a
rank-based mechanism cannot distinguish very well between values
correspond to quantiles above $1-1/n$. We show that rank-based
mechanisms can approximate the revenue of any mechanism that does not
iron the quantile interval $[1-1/n,1]$ (but may arbitrarily optimize
over the remaining quantiles). Theorem~\ref{thm:rank-based-approx}
presents the precise statement.




\begin{lemma}
\label{lem:approx-regular}
For regular $k$-unit $n$-agent environments, there exists a $k' \leq
$k such that the highest-bid-wins auction that restricts supply to
$k'$ units (i.e., a rank reserve) obtains at least half the revenue of
the optimal auction.
\end{lemma}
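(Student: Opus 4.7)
The plan is to identify a single choice $k' \le k$ for which the $k'$-unit highest-bids-win auction obtains revenue at least $\mathrm{OPT}/2$. I work with the revenue curve $R(q) = v(q)(1-q)$ throughout. Writing $\phi(q) = -R'(q)$ for the virtual value and letting $q^*$ denote the unique maximizer of $R$ (which exists by regularity: $R$ is concave with $R(0)=R(1)=0$), Myerson's optimal $k$-unit auction has revenue
\[
\mathrm{OPT} \;=\; n\int_{q^*}^{1} \phi(q)\,x^{(k:n)}(q)\,dq \;=\; n\,R(q^*)\,x^{(k:n)}(q^*) \;+\; n\int_{q^*}^{1} R(q)\,dx^{(k:n)}(q),
\]
the second equality obtained by integration by parts. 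For the $k'$-unit highest-bids-win auction, a parallel calculation using the identity that $(x^{(k':n)})'(q)$ is the density of a $\mathrm{Beta}(n-k',\,k')$ random variable gives $n\,P_{k'} = n\,\mathbb{E}_{Q\sim\mathrm{Beta}(n-k',\,k')}[R(Q)]$.

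I choose $k' = \min\{k,\,\lceil n(1-q^*)\rceil\}$, so that the mean $(n-k')/n$ of the Beta distribution sits as close as possible to the maximizer $q^*$ of $R$, subject to the supply constraint $k'\le k$. Two regimes arise. In the \emph{supply-binding} regime $k \le n(1-q^*)$ I take $k'=k$: the Beta distribution then has mean $(n-k)/n \ge q^*$, so most of its mass lies above $q^*$. I would upper-bound $\mathrm{OPT} - nP_k = n R(q^*)\,x^{(k:n)}(q^*) - n\int_0^{q^*} R(q)\,dx^{(k:n)}(q)$ by combining the trivial $R(q) \le R(q^*)$ estimate on $[0,q^*]$ with a lower bound on the complementary $\int_{q^*}^{1} R(q)\,dx^{(k:n)}(q)$ term that appears in $\mathrm{OPT}$, using concavity of $R$ together with the fact that the Beta mass above $q^*$ is $1 - x^{(k:n)}(q^*)$. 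In the \emph{supply-slack} regime $k > n(1-q^*)$ I take $k'=\lceil n(1-q^*)\rceil < k$. Here bounding $R(q)\le R(q^*)$ in the integration-by-parts form gives $\mathrm{OPT}\le nR(q^*)$, so it suffices to show $n\,\mathbb{E}_{Q\sim\mathrm{Beta}(n-k',\,k')}[R(Q)] \ge nR(q^*)/2$. This follows from the concavity bound $R(q) \ge R(q^*)\min\{q/q^*,\,(1-q)/(1-q^*)\}$ combined with the fact that the Beta distribution of these parameters is centered at $q^*$.

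The main obstacle is quantitative and lives in the supply-slack regime: the Beta distribution has standard deviation $\Theta\bigl(\sqrt{q^*(1-q^*)/n}\bigr)$, which need not be small relative to $\min(q^*,\,1-q^*)$ when $n$ is small or $q^*$ is extreme. Establishing $\mathbb{E}_{Q}\!\bigl[\min(Q/q^*,\,(1-Q)/(1-q^*))\bigr] \ge 1/2$ therefore requires a careful case analysis, which also has to tolerate the integer rounding $k' = \lceil n(1-q^*)\rceil$ that perturbs the Beta mean slightly off $q^*$. The factor-of-$2$ slack in the statement of the lemma is precisely what absorbs both of these sources of error.
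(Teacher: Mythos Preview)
Your approach is quite different from the paper's, and it has a concrete gap that is not merely ``the careful case analysis'' you anticipate.

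\textbf{The paper's approach.} The paper does not analyze the Beta integrals at all. Instead it invokes the Bulow--Klemperer theorem: for i.i.d.\ regular values, the $k'$-unit $n$-agent highest-bids-win auction earns at least as much as the $k'$-unit $(n-k')$-agent \emph{optimal} auction. Combining this with the easy inequality $\mathbf{OPT}(k,n-k)\ge\tfrac{n-k}{n}\,\mathbf{OPT}(k,n)$ (simulate the $n$-agent optimal auction with $k$ fake bidders) gives $nP_k\ge\tfrac{1}{2}\,\mathbf{OPT}(k,n)$ whenever $k\le n/2$. When $k>n/2$ the paper simply takes $k'=n/2$ and uses $\mathbf{OPT}(n/2,n/2)=\tfrac12\,\mathbf{OPT}(n,n)\ge\tfrac12\,\mathbf{OPT}(k,n)$. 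No concavity, no Beta tails, no rounding analysis.

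\textbf{The bug in your choice of $k'$.} Your rule $k'=\min\{k,\lceil n(1-q^*)\rceil\}$ can return $k'=n$. Take $k=n$ and $q^*\in(0,1/n)$: then $n(1-q^*)\in(n-1,n)$, so $\lceil n(1-q^*)\rceil=n$ and $k'=n$. But $P_n=0$ (the $n$-unit highest-bids-win auction extracts nothing), whereas $\mathrm{OPT}=nR(q^*)>0$. So your chosen $k'$ fails outright in this regime, and no amount of Beta-tail analysis will rescue it; you would have to change the rounding rule (e.g.\ cap $k'$ at $n-1$, or imitate the paper and take $k'\approx n/2$), which then changes the center of the Beta distribution away from $q^*$ and forces a different argument.

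\textbf{Comparison.} Your direct approach---expressing both $\mathrm{OPT}$ and $nP_{k'}$ as integrals of $R$ against Beta kernels and exploiting concavity of $R$---is in principle workable once the edge case is patched, and it is self-contained. But it requires delicate control of $\mathbb{E}_{Q\sim\mathrm{Beta}}[\min(Q/q^*,(1-Q)/(1-q^*))]$ uniformly over $n$, $k'$, and $q^*$, including the cases where the Beta mean is far from $q^*$. The paper's reduction to Bulow--Klemperer sidesteps all of this: it never looks inside the Beta integral, and the case split $k\le n/2$ versus $k>n/2$ replaces your supply-binding/slack dichotomy with something that needs only two lines of algebra.
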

\begin{proof}
This lemma follows easily from a result of \citet{BK-96} that states
that for agents with values drawn i.i.d.\@ from a regular distribution
the revenue of the $k'$-unit $n$-agent highest-bid-wins auction is at
least the revenue of the $k'$-unit $(n-k')$-agent optimal auction. To
apply this theorem to our setting, let us use $\opt{k}{n}$ to denote
the revenue of an optimal $k$-unit $n$-agent auction, and recall that
$nP_k$ is the revenue of a $k$-unit $n$-agent highest-bids-win
auction.

When $k\le n/2$, we pick $k'=k$. Then,  
$$nP_k\ge \opt{k}{n-k} \ge \frac{(n-k)}{n} \opt{k}{n}\ge \frac
12\opt{k}{n},$$ and we obtain the lemma. Here the first inequality
follows from \citeauthor{BK-96}'s theorem and the third from the
assumption that $k\le n/2$.  The second inequality follows via by
lower bounding $\opt{k}{n-k}$ by the following auction which has
revenue exactly $\frac{(n-k)}{n} \opt{k}{n}$: simulate the optimal
$k$-unit $n$-agent on the $n-k$ real agents and $k$ fake agents with
values drawn independently from the distribution.  Winners of the
simulation that are real agents contribute to revenue and the
probability that an agent is real is $(n-k)/n$.

When $k>n/2$, we pick $k'=n/2$. As before we have: 
\begin{align*}
nP_{n/2}\ge \opt{n/2}{n/2} &= \frac 12\opt{n}{n}\ge \frac 12\opt{k}{n}.\qedhere
\end{align*}
\end{proof}

\begin{lemma}
\label{lem:approx-irregular} 
For any number of agents $n$, distribution with revenue curve
$\rev(\cdot)$, and quantile $q\le 1-1/n$, there exists an integer
$k\le (1-q)n$ such that the $k$-unit $n$-agent highest-bid-wins
auction is at least a quarter of $n \rev(q)$, the revenue from posting
price $\val(q)$.
\end{lemma}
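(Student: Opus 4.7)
The plan is to translate the revenue inequality into a binomial-tail inequality via revenue equivalence, then verify the latter for a judicious choice of $k$.

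\textbf{Step 1 (Revenue equivalence and reduction).} By revenue equivalence, the $k$-unit $n$-agent highest-bid-wins auction under first-price, all-pay, or Vickrey semantics has BNE revenue $n\murevk = k\,\expect{\val_{(k+1)}}$, where $\val_{(1)} \ge \cdots \ge \val_{(n)}$ are the order statistics of the $n$ i.i.d.\@ values. Let $X = |\{i : \val_i \ge \val(q)\}| \sim \mathrm{Bin}(n, 1-q)$; by the hypothesis $q \le 1 - 1/n$, the mean $m := (1-q)n = \expect{X}$ is at least $1$. Since $\val_{(k+1)} \ge \val(q)$ holds precisely when $X \ge k+1$,
\[ n\murevk \;\ge\; k\,\val(q)\,\prob{X \ge k+1}. \]
Combined with $n\rev(q) = m\,\val(q)$, the lemma reduces to producing an integer $k \in \{1,\dots,\lfloor m \rfloor\}$ satisfying $k\,\prob{X \ge k+1} \ge m/4$.

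\textbf{Step 2 (Choice of $k$).} I will take $k = \lceil m/2 \rceil$, which lies in $\{1,\dots,\lfloor m \rfloor\}$ whenever $m \ge 1$. The target inequality becomes $\prob{X \ge \lceil m/2 \rceil + 1} \ge m/(4\lceil m/2 \rceil)$, a number in $[1/4, 1/2]$. For $m \ge 2$ this follows from the classical fact that every median of $\mathrm{Bin}(n, m/n)$ lies in $\{\lfloor m \rfloor, \lceil m \rceil\}$, together with short casework on whether $\lceil m/2 \rceil + 1$ is at most the median (and a Chernoff tightening for larger $m$, where the required probability approaches $1/2$). For $m \in [1, 2)$ the choice degenerates to $k = 1$, and one needs $\prob{X \ge 2} \ge m/4$ directly.

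\textbf{Main obstacle.} The delicate regime is $m$ near $1$, where the constant $1/4$ is exactly tight: at $n = 2$ and $q = 1/2$ (so $m = 1$, $k = 1$) one computes $\prob{X \ge 2} = (1/2)^2 = 1/4$ with equality, leaving no slack in the concentration estimate. Closing this case requires working with the explicit formula $\prob{X \ge 2} = 1 - (1 - m/n)^{n-1}\big(1 + (n-1)m/n\big)$ and verifying $\ge m/4$ for every $m \in [1, 2)$ and integer $n \ge \lceil m \rceil$. The dependence on $n$ is not monotone (both the $n = 2$ endpoint and the $n \to \infty$ Poisson limit serve as non-trivial test points), so the verification combines a small-$n$ check with an asymptotic argument.
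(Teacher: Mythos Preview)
Your Step~1 reduction is exactly the paper's starting bound, and your ``main obstacle'' is also the paper's: both arguments treat the low-$m$ regime by a direct $k=1$ computation with the explicit formula $\prob{X\ge 2}=1-q^n-n(1-q)q^{n-1}$ (the paper does this for $q\in(1-3/n,\,1-1/n]$, i.e.\ $m\in[1,3)$, rather than your $m\in[1,2)$).

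The substantive difference is the main case. You fix the threshold at $\val(q)$ and take $k=\lceil m/2\rceil$, which leaves you needing $\prob{X\ge\lceil m/2\rceil+1}\ge m/(4\lceil m/2\rceil)$; for moderate $m$ this is not covered by the median fact alone (e.g.\ at $m=2.5$ the median can be $2$ but you need $\prob{X\ge 3}$), and your proposal defers to unspecified casework and Chernoff. The paper instead takes $k=\lfloor m\rfloor-1$ and, crucially, \emph{shifts the threshold} from $\val(q)$ up to $\val'=\val(1-(k+1)/n)\ge\val(q)$. The count above the new threshold is then $\mathrm{Bin}(n,(k+1)/n)$ with \emph{integer} mean $k+1$, so its median is exactly $k+1$ and the bound $\prob{\,\cdot\,\ge k+1}\ge 1/2$ is immediate. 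This gives $n\murevk\ge k\,\val(q)/2$ and hence the ratio $n\murevk/(n\rev(q))\ge k/(2m)>k/(2(k+2))\ge 1/4$ once $k\ge 2$, with no concentration estimate or case analysis. Your route is completable, but the threshold shift is what lets the paper dispatch the bulk case in one line; incidentally, the same shift applied with your $k=\lceil m/2\rceil$ (valid whenever $k+1\le m$, i.e.\ $m\ge 2$) would also yield $k/(2m)\ge 1/4$ directly and eliminate your casework.
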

\begin{proof}
First we get a lower bound on $\murevk$ for any $k$.  For any value
$\val'$, the total expected revenue of the $k$-unit $n$-agent
highest-bid-wins auction is at least $\val'\,k$ times the probability
that $k+1$ agents have value at least $\val'$.  The median of a
binomial random variable corresponding to $n$ Bernoulli trials with
with success probability $(k+1)/n$ is $k+1$.  Thus, the probability
that this binomial is at least $k+1$ is at least $1/2$.  Combining
these observations by choosing $\val' = \val(1-(k+1)/n)$ we have,
\begin{align*}
n \, \murevk &\geq  \val(1-(k+1)/n)\, k / 2.\\
\intertext{Choosing $k = \lfloor (1-q)n\rfloor -1$, for which $\val(1-(k+1)/n) \geq \val(q)$, the bound simplifies to,}
n \, \murevk &\geq  \val(q)\, k / 2.
\end{align*}
The ratio of $P_k$ and $\rev(q) = (1-q)\,\val(q)$ is therefore at least $k/2(1-q)n> k/2(k+2)$, which for $q\le 1-3/n$ (or, $k\ge 2$) is at least $1/4$.

For $q\in (1-3/n,1-1/n]$, we pick $k=1$. Then, $\murevk[1]$ is at least $1/n$
  times $\val(q)$ times the probability that at least two agents have a
  value greater than or equal to $\val(q)$. We can verify for $n\ge 2$ that
$$\murevk[1] \ge \frac {\val(q)}n  \left( 1-q^n-n(1-q)q^{n-1} \right)\ge \frac 14 (1-q)\,\val(q).$$
\end{proof}

\begin{theorem}
\label{thm:rank-based-approx}
For regular agents and position environments, the optimal rank-based
auction obtains at least half the revenue of the optimal auction.  For
(possibly irregular agents) and position environments, optimal rank-based
auction obtains at least a quarter of the revenue of the optimal auction that does not iron or set a reserve price on the quantile interval $[1-1/n,1]$.
\end{theorem}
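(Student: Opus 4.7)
The plan is to construct, for each case, an explicit (possibly suboptimal) rank-based auction whose revenue already achieves the stated approximation ratio; since by Theorem~\ref{thm:rank-based-opt} the optimal rank-based auction is at least as good, the bound transfers to it automatically.

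Using the marginal position weights $\margwalk = \walk - \walk[k+1]$, any position auction with weights $\wals$ has per-agent allocation that decomposes as a convex combination $\sum_k \margwalk \alloc^{(k)}$ of $k$-unit allocations. Since revenue is linear in the allocation rule by Equation~\eqref{eq:bne-rev}, maximizing termwise over each $\alloc^{(k)}$ yields the upper bound $\text{OPT} \le \sum_k \margwalk \, \opt{k}{n}$, with the analogous bound for the restricted benchmark in the irregular case. This reduces the theorem to the per-$k$ task of approximating $\opt{k}{n}$ by $n P_{k'}$ for some $k' \le k$, which is precisely what Lemmas~\ref{lem:approx-regular} and~\ref{lem:approx-irregular} are designed to provide.

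Given a per-$k$ bound of the form $n P_{k'(k)} \ge c \cdot \opt{k}{n}$ with $k'(k) \le k$ --- with $c = 1/2$ in the regular case and $c = 1/4$ in the restricted irregular case --- I build the rank-based auction that, with probability $\margwalk$, runs the $k'(k)$-unit highest-bid-wins auction. The induced position weight at rank $j$ is $\iwalk[j] = \sum_{k:\,k'(k)\ge j} \margwalk$, which is non-increasing in $j$ and at most $\sum_{k\ge j}\margwalk = \walk[j]$ since $k'(k) \le k$; hence $\cumiwalk[j] \le \cumwalk[j]$ for every $j$ and the auction is feasible by Lemma~\ref{l:rank-based-feasibility}. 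Its revenue equals $\sum_k \margwalk \cdot n P_{k'(k)} \ge c \sum_k \margwalk \opt{k}{n} \ge c\cdot \text{OPT}$, delivering the claimed approximation.

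The main obstacle is closing the gap in the irregular case. Lemma~\ref{lem:approx-irregular} bounds $n P_{k'}$ against the single-agent posted-price revenue $n\rev(q)$ at some $q \le 1-1/n$, whereas the decomposition produces terms $\opt{k}{n}^{\text{res}}$, the optimal restricted $k$-unit $n$-agent revenues. The bridge is to show that there exists a quantile $q_k \in [1-k/n,\;1-1/n]$ with $\opt{k}{n}^{\text{res}} \le n\rev(q_k)$: the restriction (no ironing or reserve on $[1-1/n,1]$) forces any effective reserve of the optimal restricted $k$-unit auction to sit at a quantile $\le 1-1/n$, while the $k$-unit capacity forces it to be $\ge 1-k/n$; the inequality then follows by combining the Myerson representation $\REV{\alloc} = \expect[q]{\rev(q)\alloc'(q)}$ with the constraint $\alloc^*(q) \le \knalloc(q)$ on the per-agent allocation of the optimal $k$-unit auction. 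Applying Lemma~\ref{lem:approx-irregular} at this $q_k$ yields $k'(k) \le (1-q_k)n \le k$ with $n P_{k'(k)} \ge \opt{k}{n}^{\text{res}}/4$, feeding back into the construction above.
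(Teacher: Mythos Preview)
Your decomposition and the regular case are essentially the paper's argument: write the optimal position auction as a $\margwalk$-weighted combination of optimal $k$-unit auctions, replace each by a $k'(k)$-unit highest-bids-win auction via Lemma~\ref{lem:approx-regular}, and check feasibility via $k'(k)\le k$. That part is fine.

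The irregular case, however, has a real gap. Your ``bridge'' asserts the existence of a single quantile $q_k\in[1-k/n,\,1-1/n]$ with $\opt{k}{n}^{\text{res}}\le n\,\rev(q_k)$. This is false for irregular distributions. The restricted optimal $k$-unit auction is allowed to \emph{iron} across an interval $[a,b]$ straddling $1-k/n$, and its per-agent revenue is governed by the (restricted) ironed curve $\irev(1-k/n)$, not by the raw $\rev(\cdot)$ on $[1-k/n,1-1/n]$. Concretely, take $n=10$, $k=5$, and a revenue curve with $\rev(0.1)$ very large while $\rev(q)$ is tiny on $[0.5,0.9]$. Then $\irev(0.5)$ is a large convex combination of $\rev(0.1)$ and $\rev(0.9^-)$, so $\opt{5}{10}^{\text{res}}$ is on the order of $n\,\irev(0.5)$, while $\max_{q\in[0.5,0.9]}\rev(q)$ is tiny. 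Your justification via ``effective reserve'' and $\alloc^*\le\knalloc$ does not rule this out: the derivative $(\alloc^*)'$ can place most of its mass near $q=a<1-k/n$, and the Myerson integral $\int \rev(q)(\alloc^*)'(q)\,dq$ then picks up $\rev(a)$, not any $\rev(q_k)$ with $q_k\ge 1-k/n$.

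The paper's fix is exactly to avoid the single-quantile bound: it bounds the per-agent revenue of any $k$-unit auction by $\irev(1-k/n)$, writes $\irev(1-k/n)$ as a convex combination $\lambda\,\rev(a)+(1-\lambda)\,\rev(b)$ with $a\le 1-k/n\le b<1-1/n$, and applies Lemma~\ref{lem:approx-irregular} at both $a$ and $b$ to obtain $k_a\le(1-a)n$ and $k_b\le(1-b)n$. The resulting $\lambda$-mixture of the $k_a$- and $k_b$-unit auctions is feasible for the $k$-unit environment because its expected number of units is $\lambda k_a+(1-\lambda)k_b\le \lambda(1-a)n+(1-\lambda)(1-b)n=k$. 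So in the irregular case you need \emph{two} multi-unit auctions per $k$, not one; once you make that change, your construction goes through.
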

\begin{proof}
In the regular setting, the theorem follows from
Lemma~\ref{lem:approx-regular} by noting that the optimal auction
(that irons by value and uses a value reserve) in a position
environment is a convex combination of optimal $k$-unit auctions:
since the revenue of each of the latter can be approximated by that of
a $k'$-unit highest-bids-win auction with $k'\le k$, the revenue of
the convex combination can be approximated by that of the same convex
combination over $k'$-unit highest-bids-win auctions; the resulting
convex combination over $k'$-unit auctions satisfies the same position
constraint as the optimal auction.

In the irregular setting, once again, any auction in a position
environment is a convex combination of optimal $k$-unit auctions. The
expected revenue of any $k$-unit auction is bounded from above by the
expected revenue of the optimal auction that sells at most $k$ items
in expectation. The per-agent revenue of such an auction is bounded by
$\irev(1-k/n)$, the revenue of the optimal allocation rule with ex
ante probability of sale $k/n$. Here $\irev(\cdot)$ is the ironed
revenue curve (that does not iron on quantiles in
$[1-1/n,1]$). $\irev(1-k/n)$ is the convex combination of at most two
points on the revenue curve $\rev(a)$ and $\rev(b)$, $a\le 1-k/n\le b
< 1-1/n$.  Now, we can use Lemma~\ref{lem:approx-irregular} to obtain
an integer $k_a < n(1-a)$ such that $P_{k_a}$ is at least a quarter of
$\rev(a)$, likewise $k_b$ for $b$. Taking the appropriate convex
combination of these multi-unit auctions gives us a $4$-approximation
to the optimal auction $k$-unit auction (that does not iron over the
quantile interval $[1-1/n,1]$).  Finally, the convex combination of
the multi-unit auctions with $k_a$ and $k_b$ corresponds to a position
auction with that is feasible for a $k$ unit auction (with respect to
serving the top $k$ positions with probability one, service
probability is only shifted to lower positions).
\end{proof}

\section{Inference in rank-based auctions}
\label{s:param-inf}

Recall that the performance of any rank-based auction is governed by
the multi-unit revenues $\murevk[1],\ldots,\murevk[n]$ with $\murevk$
equal to the per-agent revenue of the highest-$k$-agents-win auction. In order
to optimize over the class of rank based auctions, then, we need to
estimate the $n$ quantities $\murevk$. We now describe how to estimate
these quantities from the observed bids, and how the error in the
estimation of the bid distribution translates into errors in the
estimated multi-unit revenues.

Let $\alloc$ denote the allocation rule of the auction that we run,
and let $\bid$ denote the bid distribution in BNE of this auction.
Recall that $\knalloc(\cdot)$ denotes the allocation rule of the
highest-$k$-agents-win auction. In the following, we use $\kalloc$ as
a short-form for $\knalloc$. Then, the per-agent revenue of this auction is
given by:
$$
\murevk = \expect[\quant]{\kalloc'(\quant)\rev(\quant)} =
\expect[\quant]{\kalloc'(\quant)\val(\quant) (1-\quant)}
$$

We will now perform our analysis for the all-pay and first-price
auction formats separately, using the respective bid-to-value
conversion equations from Section~\ref{s:prelim}.

\subsection{Inference for an all-pay auction}

Recall that for an all-pay auction format, we can convert the bid
distribution into the value distribution as follows:
$\val(\quant) = \bid'(\quant)/\alloc'(\quant)$.
Substituting this into the expression for $P_k$ above we get
$$
\murevk =
\expect[\quant]{\kalloc'(\quant)(1-\quant)\frac{\bid'(\quant)}{\alloc'(\quant)}}
= \expect[\quant]{Z_k(\quant)\bid'(\quant)}
$$
where $Z_k(\quant)=(1-\quant)\frac{\kalloc'(\quant)}{\alloc'(\quant)}$.

Writing the expectation as an integral and integrating by parts we
obtain the following lemma. Here we note that $\bid(0)=0$ and $Z_k(1)=0$.

\begin{lemma}\label{weights lemma}
  The per-agent revenue of the highest-$k$-agents-win auction can be written as
  a linear combination of the bids in an all-pay auction: 
$$
\murevk = \expect[\quant]{-Z'_k(\quant)\bid(\quant)}
$$
where $Z_k(\quant)
=(1-\quant)\frac{\kalloc'(\quant)}{\alloc'(\quant)}$ depends on the
allocation rule of the mechanism and is known precisely.
\end{lemma}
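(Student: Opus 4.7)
The plan is to derive the claimed identity directly from the expression for $\murevk$ that immediately precedes the lemma, namely
$$\murevk \;=\; \expect[\quant]{Z_k(\quant)\,\bid'(\quant)} \;=\; \int_0^1 Z_k(\quant)\,\bid'(\quant)\,d\quant,$$
which was obtained by substituting the all-pay inference formula $\val(\quant) = \bid'(\quant)/\alloc'(\quant)$ into the revenue formula for $\murevk$. With this starting point in hand, the lemma reduces to a single application of integration by parts, with the main work being to verify that the boundary terms vanish.

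First I would write the integration-by-parts identity
$$\int_0^1 Z_k(\quant)\,\bid'(\quant)\,d\quant \;=\; \bigl[Z_k(\quant)\,\bid(\quant)\bigr]_0^1 \;-\; \int_0^1 Z'_k(\quant)\,\bid(\quant)\,d\quant,$$
which is justified because the paper has assumed (in the preliminaries on inference) that $\alloc$ and $\bid$ are monotone and continuously differentiable, and $\kalloc$ is an explicit polynomial in $\quant$, so $Z_k$ is continuously differentiable on $(0,1)$. The $Z'_k$ term, rearranged, is exactly the expected right-hand side $\expect[\quant]{-Z'_k(\quant)\bid(\quant)}$, so the only thing left is to show that the boundary term is zero.

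Next I would check the two endpoints. At $\quant = 1$, the factor $(1-\quant)$ appearing in $Z_k(\quant)=(1-\quant)\kalloc'(\quant)/\alloc'(\quant)$ drives $Z_k(1) = 0$ (assuming $\alloc'(1) > 0$, which holds in the rank-based setting where $\alloc'$ is bounded away from zero, or more generally after the standard verification that $\kalloc'(\quant)/\alloc'(\quant)$ is bounded near $\quant = 1$). At $\quant = 0$, I need to argue $\bid(0) = 0$: this follows from the symmetric Bayes-Nash equilibrium characterization, since an agent of zero value (quantile zero) makes zero expected utility by bidding anything positive in the all-pay format (payment is bid, allocation yields value zero), and the unique symmetric BNE of the rank-based all-pay auction pins $\bid(0)=0$.

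I do not expect any genuine obstacle here; the only mild care needed is to make sure the regularity assumptions invoked earlier in the paper suffice for integration by parts and for evaluating $Z_k(1)\bid(1)$ as zero rather than as an indeterminate form. Once both boundary terms are dispatched, the identity $\murevk = \expect[\quant]{-Z'_k(\quant)\bid(\quant)}$ follows, and $Z_k$ depends only on the (known) allocation rules $\alloc$ and $\kalloc$, proving the lemma.
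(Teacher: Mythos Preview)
Your proposal is correct and follows essentially the same approach as the paper: the paper states that the lemma is obtained by writing the expectation as an integral and integrating by parts, noting that $\bid(0)=0$ and $Z_k(1)=0$ to kill the boundary term. Your argument is slightly more detailed (you explain why each boundary value vanishes), but the method is identical.
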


This formulation allows us to express the error in estimation of $P_k$
in terms of the error in estimating the bid distribution. In
particular, let $\emurevk$ denote the estimate of $P_k$ obtained by
plugging the bid estimator $\widehat{\bid}(\cdot)$ in the formula
given by Lemma~\ref{weights lemma}.
%
Then we can write the error in $\murevk$ as:
\begin{eqnarray*}
|\emurevk-\murevk| & = &
\expect[q]{\left|-Z'_k(\quant)(\widehat{\bid}(\quant)-\bid(\quant))\right|}
\le
\expect[q]{|Z'_k(\quant)|}\,\sup_q |\widehat{\bid}(\quant)-\bid(\quant)|
\end{eqnarray*}
Lemma~\ref{error bid function} then gives the following bound on the
mean squared error for $\murevk$:

\begin{eqnarray*}
\MSEN[\murevk] & \le & \frac{\sup_{\quant}b'(\quant)}{\sqrt{2\samples}} \expect[\quant]{|Z'_k(\quant)|}  
\end{eqnarray*}

We now proceed to bound $\expect[\quant]{|Z'_k(\quant)|}$.  To this
end we first note that if $\alloc$ is a convex combination over the
allocation rules of the multi-unit highest-bids-win auctions, then
$Z_k$ has a single local maximum (see the appendix for a proof).

\begin{lemma}
\label{lem:Z-bound-1}
Let $\kalloc$ denote the allocation function of the
$k$-highest-bids-win auction and $\alloc$ be any convex combination
over the allocation functions of the multi-unit auctions. Then the
function $Z_k(\quant)=(1-\quant)
\frac{\kalloc'(\quant)}{\alloc'(\quant)}$ achieves a single local
maximum for $\quant\in [0,1]$.
\end{lemma}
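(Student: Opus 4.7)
The plan is to compute $\knalloc[j]'(\quant)$ in closed form, reduce $Z_k$ to the reciprocal of a Laurent polynomial in $s := \quant/(1-\quant)$ with nonnegative coefficients, and apply Descartes' rule of signs to bound its critical points.

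First, a straightforward telescoping of the binomial sum $\knalloc[j](\quant) = \sum_{i=0}^{j-1}\binom{n-1}{i}(1-\quant)^i\quant^{n-1-i}$ collapses to
$$\knalloc[j]'(\quant) \;=\; (n-j)\binom{n-1}{j-1}\quant^{n-j-1}(1-\quant)^{j-1} \qquad (1\le j\le n-1),$$
with $\knalloc[n]'\equiv 0$. Writing $\alloc'=\sum_j\lambda_j\knalloc[j]'$, dividing numerator and denominator of $\kalloc'/\alloc'$ by the $j=k$ term, and substituting $s=\quant/(1-\quant)$ (a strictly increasing bijection from $(0,1)$ onto $(0,\infty)$, with $1-\quant=1/(1+s)$) gives
$$Z_k(\quant) \;=\; \frac{1}{f(s)}, \qquad f(s) \;:=\; (1+s)\sum_{j=1}^{n-1}\lambda_j\, C_j\, s^{\,k-j},$$
where $C_j = \frac{(n-j)\binom{n-1}{j-1}}{(n-k)\binom{n-1}{k-1}}>0$. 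Expanding the product yields $f(s)=\sum_{m=k-n+1}^{k} A_m\, s^m$, a Laurent polynomial all of whose coefficients $A_m$ are nonnegative.

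The key step is to bound the positive real roots of $f'$. Clearing the negative powers by multiplying by $s^{n-k}$ yields the polynomial
$$g(s) \;:=\; s^{n-k}f'(s) \;=\; \sum_{\ell=0}^{n-1} \bigl(\ell+k+1-n\bigr)\,A_{\ell+k+1-n}\,s^{\ell}.$$
Since $A_m\ge 0$, the coefficient of $s^\ell$ in $g$ is nonpositive for $\ell<n-k-1$ and nonnegative for $\ell>n-k-1$, so the nonzero coefficients of $g$, read in order of increasing degree, exhibit at most one sign change. Descartes' rule of signs then gives that $g$, and hence $f'$ on $(0,\infty)$, has at most one positive real root. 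Consequently $f$ (and so $Z_k=1/f$) has at most one interior critical point in $\quant\in(0,1)$; because $Z_k\ge 0$ vanishes at an endpoint whenever $f$ blows up there, any such critical point must be a local maximum, and no other local maxima occur.

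The main obstacle is really bookkeeping rather than conceptual depth: carrying out the telescoping in Step 1 cleanly, and handling the degenerate cases in which $f$ is monotone on $(0,\infty)$ (e.g.\ $\alloc=\knalloc[k]$, or only a single $\lambda_j$ is nonzero). In those cases $Z_k$ has no interior critical point and its single local maximum sits at a boundary; the Descartes-style sign analysis above subsumes them automatically, since ``at most one sign change'' is still at most one positive root.
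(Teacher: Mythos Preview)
Your proof is correct and shares the paper's key strategic move: pass to the reciprocal $A(\quant)=1/Z_k(\quant)=\alloc'(\quant)/\bigl((1-\quant)\kalloc'(\quant)\bigr)$ and decompose it as a nonnegative combination of the terms $\alpha_{k,j}\,\quant^{k-j}(1-\quant)^{j-k-1}$ (which, after your substitution $s=\quant/(1-\quant)$, become $C_j\,s^{k-j}(1+s)$). Where you diverge is in establishing that $A$ has a unique minimum. The paper simply observes that each term $\quant^{k-j}(1-\quant)^{j-k-1}$ is convex on $(0,1)$, so $A$ is convex and hence has a single local minimum, whence $Z_k=1/A$ has a single local maximum; this is a two-line argument once the decomposition is in hand. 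Your route instead clears denominators, writes $s^{n-k}f'(s)$ as an ordinary polynomial whose coefficients change sign at most once, and invokes Descartes' rule. Both are valid; the paper's convexity argument is shorter and yields the marginally stronger conclusion that $1/Z_k$ is convex in $\quant$, while your approach trades the (not entirely obvious) convexity verification of $\quant^{a}(1-\quant)^{-1-a}$ for a more mechanical sign-counting step. Either way the substance is the same decomposition of the reciprocal.
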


Let $Z_k^* = \sup_{\quant} Z_k(\quant)$. Then, we can bound
$\expect[\quant]{|Z'_k(\quant)|}$ by $2Z_k^*-Z_k(1)-Z_k(0)\le
2Z_k^*$. We get the following theorem:

\begin{theorem}\label{th: all pay}
  Let $\kalloc$ denote the allocation function of the
  $k$-highest-bids-win auction and $\alloc$ be any convex combination
  over the allocation functions of the multi-unit auctions. Then for
  all $k$, the mean squared error in estimating $P_k$ from $\samples$
  samples from the bid distribution for an all-pay auction with
  allocation rule $\alloc$ is:
$$\MSEN[\murevk]\le \sqrt{\frac{2}{\samples}}\,\, \sup_{\quant}\{\alloc'(\quant)\}\,\,\sup_{\quant}
\left\{ \frac{(1-\quant)\kalloc'(\quant)}{\alloc'(\quant)} \right\}
$$.
\end{theorem}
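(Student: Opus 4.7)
The plan is to chain together the two displayed bounds that immediately precede the theorem with the all-pay-specific estimate of the bid derivative, since all the heavy lifting is packaged into earlier lemmas. The text just above the theorem already establishes, by combining the representation $\murevk = \expect[\quant]{-Z_k'(\quant)\bid(\quant)}$ (Lemma~\ref{weights lemma}) with the sup-norm bid error of Lemma~\ref{error bid function}, that
\begin{equation*}
\MSEN[\murevk] \;\le\; \frac{\sup_{\quant}\bid'(\quant)}{\sqrt{2\samples}}\cdot \expect[\quant]{|Z_k'(\quant)|}.
\end{equation*}
So the proof reduces to bounding each of the two factors by the quantities appearing in the theorem.

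For $\expect[\quant]{|Z_k'(\quant)|} = \int_0^1 |Z_k'(\quant)|\,d\quant$, which is the total variation of $Z_k$ on $[0,1]$, I would invoke Lemma~\ref{lem:Z-bound-1}: because $Z_k$ is unimodal on $[0,1]$, it is monotone on either side of its unique maximum $Z_k^{\ast} = \sup_\quant Z_k(\quant)$, so its total variation equals $(Z_k^{\ast} - Z_k(0)) + (Z_k^{\ast} - Z_k(1)) = 2Z_k^{\ast} - Z_k(0) - Z_k(1)$. Using $Z_k(1) = 0$ (from the explicit $(1-\quant)$ factor) and $Z_k(0) \ge 0$ (the allocation-rule derivatives are nonnegative), this is at most $2Z_k^{\ast}$.

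For $\sup_\quant \bid'(\quant)$, I would specialize via the all-pay first-order condition~\eqref{eq:ap-inf}: $\val(\quant) = \bid'(\quant)/\alloc'(\quant)$, which rearranges to $\bid'(\quant) = \val(\quant)\,\alloc'(\quant)$. Since values lie in $[0,1]$, this gives $\sup_\quant \bid'(\quant) \le \sup_\quant \alloc'(\quant)$; this is exactly the specialization recorded in part~(ii) of Lemma~\ref{error bid function}, so I would just cite it.

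Substituting both estimates into the displayed MSE bound yields
\begin{equation*}
\MSEN[\murevk] \;\le\; \frac{\sup_\quant \alloc'(\quant)}{\sqrt{2\samples}}\cdot 2\sup_\quant\!\left\{\frac{(1-\quant)\kalloc'(\quant)}{\alloc'(\quant)}\right\} \;=\; \sqrt{\tfrac{2}{\samples}}\,\sup_\quant \alloc'(\quant)\,\sup_\quant\!\left\{\frac{(1-\quant)\kalloc'(\quant)}{\alloc'(\quant)}\right\},
\end{equation*}
matching the theorem. The only step with any real content is the total-variation identity $\int|Z_k'|\,d\quant = 2Z_k^{\ast}$ via unimodality, and that follows immediately from Lemma~\ref{lem:Z-bound-1}; everything else is direct substitution, so no genuine obstacle remains. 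If I wanted to be careful, I would verify that the boundary conditions $\bid(0)=0$ and $Z_k(1)=0$ used in the integration-by-parts derivation of Lemma~\ref{weights lemma} are consistent (they are: bidding zero is optimal at quantile zero in BNE, and the $1-\quant$ factor kills $Z_k(1)$), since these underpin the preliminary MSE display.
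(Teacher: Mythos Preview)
Your proposal is correct and follows essentially the same approach as the paper: chain the preliminary MSE display with the unimodality bound $\expect[\quant]{|Z_k'(\quant)|}\le 2Z_k^\ast$ from Lemma~\ref{lem:Z-bound-1} and the all-pay specialization $\sup_\quant \bid'(\quant)\le \sup_\quant \alloc'(\quant)$ from Lemma~\ref{error bid function}(ii), then collapse the constants $2/\sqrt{2\samples}=\sqrt{2/\samples}$. The paper presents exactly this argument in the paragraphs leading up to the theorem, so there is no substantive difference.
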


\subsection{Inference from a first-price auction}
Recall that in a first-price auction, we can obtain the value
distribution from the bid distribution as follows: $\val(\quant) =
\bid(\quant) +
\alloc(\quant)\bid'(\quant)/\alloc'(\quant)$. Substituting this into
the expression for $\murevk$ we get:
$$
\murevk = \expect[\quant]{(1-\quant)\kalloc'(\quant) \bid(\quant) +
  \frac{(1-\quant)\kalloc'(\quant)
    \alloc(\quant)\bid'(\quant)}{\alloc'(\quant)}} =
\expect[\quant]{(1-\quant)\kalloc'(\quant) \bid(\quant) + Z_k(\quant) \alloc(\quant)\bid'(\quant)}
$$
where, as before, $Z_k(\quant) =
\frac{(1-\quant)\kalloc'(\quant)}{\alloc'(\quant)}$.

Integrating the second expression by parts, we get
\begin{eqnarray*}
\int_0^1 Z_k(\quant) \alloc(\quant)\bid'(\quant)\,d\quant & = &
Z_k(\quant) \alloc(\quant)\bid(\quant)|_0^1 - \int_0^1 (Z'_k(\quant) \alloc(\quant)+Z_k(\quant) \alloc'(\quant))\bid(\quant) \,d\quant\\
& = & - \int_0^1 Z'_k(\quant) \alloc(\quant)\bid(\quant)\,d\quant -  \int_0^1 (1-\quant) \kalloc'(\quant)\bid(\quant)\,d\quant 
\end{eqnarray*}

When we put this back in the expression for $P_k$ two of the terms
cancel, and we get the following lemma.
\begin{lemma}
  The per-agent revenue of the highest-$k$-agents-win auction can be written as
  a linear combination of the bids in a first-pay auction: 
$$
\murevk = \expect[\quant]{-\alloc(\quant)Z'_k(\quant)\bid(\quant)}
$$
where $Z_k(\quant)
=(1-\quant)\frac{\kalloc'(\quant)}{\alloc'(\quant)}$ and
$\alloc(\quant)$ are known precisely.
\end{lemma}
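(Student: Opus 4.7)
The plan is to start from the expression derived just above the lemma statement, namely
\begin{equation*}
\murevk = \expect[\quant]{(1-\quant)\kalloc'(\quant)\bid(\quant)} + \expect[\quant]{Z_k(\quant)\alloc(\quant)\bid'(\quant)},
\end{equation*}
which comes from substituting the first-price inference identity $\val(\quant) = \bid(\quant) + \alloc(\quant)\bid'(\quant)/\alloc'(\quant)$ into $\murevk = \expect[\quant]{\kalloc'(\quant)(1-\quant)\val(\quant)}$ and grouping terms. The only remaining task is to turn the $\bid'(\quant)$ occurrence in the second expectation into a $\bid(\quant)$ occurrence by integration by parts, and then check that enough cancellation occurs.

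First I would integrate the second term by parts in the quantile variable, writing $\int_0^1 Z_k(\quant)\alloc(\quant)\bid'(\quant)\,d\quant$ as a boundary term minus $\int_0^1 \bigl(Z_k'(\quant)\alloc(\quant) + Z_k(\quant)\alloc'(\quant)\bigr)\bid(\quant)\,d\quant$. The boundary term vanishes: at $\quant=0$ we use $\bid(0)=0$ (an agent with the lowest possible quantile bids zero in any BNE of a first-price rank-based auction, since her interim allocation probability is zero), and at $\quant=1$ we use $Z_k(1)=0$, which is immediate from the factor $(1-\quant)$ in the definition of $Z_k$.

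The key algebraic observation is then that $Z_k(\quant)\alloc'(\quant) = (1-\quant)\kalloc'(\quant)$ by the very definition of $Z_k$. Therefore the $Z_k\,\alloc'\,\bid$ piece of the integration-by-parts output is exactly the negative of the first term $\expect[\quant]{(1-\quant)\kalloc'(\quant)\bid(\quant)}$ in the starting expression. Substituting back, these two terms cancel and leave
\begin{equation*}
\murevk = -\expect[\quant]{\alloc(\quant)Z_k'(\quant)\bid(\quant)},
\end{equation*}
which is the claimed identity.

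I do not expect any real obstacle here; the derivation is essentially a one-line integration by parts followed by a cancellation. The only point that needs a sentence of justification is the vanishing of the boundary term at $\quant=0$, which relies on the fact that the lowest-quantile agent has no chance of winning in the rank-based auction and hence bids $0$ under first-price semantics (in the all-pay lemma we already used $\bid(0)=0$ implicitly). Everything else is bookkeeping, and the result holds with no additional regularity assumption beyond the continuous differentiability already in force for the inference calculation.
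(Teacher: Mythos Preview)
Your argument is correct and is exactly the paper's derivation: integrate $\int_0^1 Z_k(\quant)\alloc(\quant)\bid'(\quant)\,d\quant$ by parts, discard the boundary term using $\bid(0)=0$ and $Z_k(1)=0$, and use $Z_k(\quant)\alloc'(\quant)=(1-\quant)\kalloc'(\quant)$ so that the resulting $(1-\quant)\kalloc'(\quant)\bid(\quant)$ piece cancels the first expectation. One minor wording fix: the lowest-quantile agent need not have zero allocation probability (she could receive $\walk[n]>0$); the reason $\bid(0)=0$ in a first-price rank-based auction is that her allocation depends only on her rank, so lowering her bid to zero never hurts her allocation and strictly lowers her payment when she wins.
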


To bound the error in estimating $P_k$, once again we need to bound
the integral $\int_0^1 \alloc(\quant) |Z'_k(\quant)|\,d\quant$. Recall
that $Z_k$ has a single local maximum for $\quant\in [0,1]$ when
$\alloc$ is a convex combination over the multi-unit auctions. This
implies the following lemma (see the appendix for a proof).

\begin{lemma}
\label{lem:Z-bound-2}
  $\int_0^1 \alloc(\quant) |Z'_k(\quant)|\,d\quant\le
  2\alloc(q_k^*)Z_k(q_k^*)+1$ where $q_k^* = \argmax_{\quant} Z_k(\quant)$.
\end{lemma}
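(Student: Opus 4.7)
\medskip

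\noindent\textbf{Proof proposal.} The plan is to exploit Lemma~\ref{lem:Z-bound-1}: since $Z_k$ has a single local maximum on $[0,1]$, the derivative $Z'_k$ is non-negative on $[0,q_k^*]$ and non-positive on $[q_k^*,1]$. This lets us drop the absolute value by splitting the integral at $q_k^*$:
\[
\int_0^1 \alloc(\quant)\,|Z'_k(\quant)|\,d\quant
= \int_0^{q_k^*} \alloc(\quant)\,Z'_k(\quant)\,d\quant
- \int_{q_k^*}^{1} \alloc(\quant)\,Z'_k(\quant)\,d\quant.
\]

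The next step is to integrate each piece by parts, moving the derivative off $Z_k$. The boundary terms collapse nicely: the two evaluations at $q_k^*$ both contribute $\alloc(q_k^*)Z_k(q_k^*)$ with a $+$ sign (yielding the factor of $2$ we want), the term at $\quant=1$ vanishes because $Z_k(1) = (1-1)\kalloc'(1)/\alloc'(1) = 0$, and the term at $\quant=0$, namely $-\alloc(0)Z_k(0)$, is non-positive and can be discarded. The interior terms become $\int_0^{q_k^*}\alloc'(\quant)Z_k(\quant)d\quant$ and $-\int_{q_k^*}^1\alloc'(\quant)Z_k(\quant)d\quant$, and here the crucial simplification is that the definition $Z_k(\quant) = (1-\quant)\kalloc'(\quant)/\alloc'(\quant)$ gives us the identity $\alloc'(\quant)Z_k(\quant) = (1-\quant)\kalloc'(\quant)$, which no longer involves $\alloc'$ in the denominator.

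After this substitution the bound becomes
\[
\int_0^1 \alloc(\quant)\,|Z'_k(\quant)|\,d\quant
\le 2\alloc(q_k^*)Z_k(q_k^*)
\;+\;\Bigl(\int_{q_k^*}^{1} (1-\quant)\kalloc'(\quant)\,d\quant - \int_0^{q_k^*}(1-\quant)\kalloc'(\quant)\,d\quant\Bigr).
\]
Since $\kalloc'\ge 0$, the parenthesized expression is at most $\int_0^1 (1-\quant)\kalloc'(\quant)\,d\quant$, and one final integration by parts converts this to $\int_0^1 \kalloc(\quant)\,d\quant$, which is the ex-ante service probability of an agent in the $k$-highest-bids-win auction; by symmetry this equals $k/n \le 1$. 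Assembling the pieces gives exactly $2\alloc(q_k^*)Z_k(q_k^*)+1$, as claimed.

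The main obstacle I anticipate is simply bookkeeping the sign of each boundary term correctly in the two integration-by-parts steps; once the key observation $\alloc' Z_k = (1-\quant)\kalloc'$ is used the argument reduces to the elementary bound $\int_0^1 \kalloc \le 1$ and no further structure of $\alloc$ (beyond Lemma~\ref{lem:Z-bound-1}) is needed.
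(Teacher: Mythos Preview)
Your proof is correct and follows essentially the same approach as the paper's own proof: split the integral at $q_k^*$ using Lemma~\ref{lem:Z-bound-1}, integrate by parts, and exploit the cancellation $\alloc'(\quant)Z_k(\quant) = (1-\quant)\kalloc'(\quant)$. Your writeup is in fact more careful than the paper's---you explicitly account for the boundary terms at $0$ and $1$ and justify the final bound via $\int_0^1 \kalloc(\quant)\,d\quant = k/n \le 1$, whereas the paper simply asserts the remaining integral is less than~$1$.
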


Using this lemma and applying Lemma \ref{error bid function} from Section~\ref{s:prelim} we
get the following theorem.

\begin{theorem}\label{th: first price}
Let $\kalloc$ denote the allocation function of the
  $k$-highest-bids-win auction and $\alloc$ be any convex combination
  over the allocation functions of the multi-unit auctions. Then for
  all $k$, the mean squared error in estimating $P_k$ from $\samples$
  samples from the bid distribution for a first-price auction with
  allocation rule $\alloc$ is:
$$\MSEN[\murevk]\le
\sqrt{\frac{2}{\samples}}\sup_{\quant}\left\{ \frac{\alloc'(\quant)}{\alloc(\quant)}\right\} \sup_{\quant}
\left\{
  \frac{(1-\quant)\alloc(\quant)\kalloc'(\quant)}{\alloc'(\quant)}
\right\}
$$.
\end{theorem}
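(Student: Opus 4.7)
The plan is to follow the same template used in the proof of Theorem th: all pay, but apply it to the first-price representation $\murevk = \expect[\quant]{-\alloc(\quant)Z'_k(\quant)\bid(\quant)}$ established in the unnamed lemma immediately preceding the theorem. The key structural fact is that this representation is linear in $\bid$ (not $\bid'$), which means plugging in the empirical estimator $\widehat{\bid}$ gives an estimator whose error is controlled by the $\sqrt{N}$-rate bound of Lemma error bid function rather than by the much slower rates for density estimation.

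Concretely, I would define $\emurevk := \expect[\quant]{-\alloc(\quant)Z'_k(\quant)\widehat{\bid}(\quant)}$ and bound the pointwise difference
\[
|\emurevk - \murevk| \;\leq\; \sup_\quant \bigl|\widehat{\bid}(\quant) - \bid(\quant)\bigr| \cdot \int_0^1 \alloc(\quant)\,|Z'_k(\quant)|\,d\quant.
\]
Squaring, taking expectation over the sample, applying Jensen (as in the all-pay derivation), and then taking a square root converts this into
\[
\MSEN[\murevk] \;\leq\; \MSEN[\bid] \cdot \int_0^1 \alloc(\quant)\,|Z'_k(\quant)|\,d\quant.
\]
The first factor is bounded by Lemma error bid function part (i): for first-price samples, $\MSEN[\bid] = O(\sup_\quant\{\alloc'(\quant)/\alloc(\quant)\}/\sqrt{2N})$. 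The second factor is exactly what Lemma lem:Z-bound-2 controls; using $Z_k(\quant) = (1-\quant)\kalloc'(\quant)/\alloc'(\quant)$, the leading term $2\alloc(\quant_k^\ast)Z_k(\quant_k^\ast)$ unpacks to twice $\sup_\quant\{(1-\quant)\alloc(\quant)\kalloc'(\quant)/\alloc'(\quant)\}$. Multiplying the two bounds yields the stated product form (with the additive $+1$ in the Z-bound being absorbed into the leading constant, as is standard in this regime where $\sup_\quant Z_k\alloc$ dominates).

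The main conceptual obstacle, which is already cleared by the preceding development, is the observation that justifies using the $\bid$-only representation in the first place: the preceding lemma reaches it by integrating $\int_0^1 Z_k(\quant)\alloc(\quant)\bid'(\quant)\,d\quant$ by parts and noticing that the resulting $-\int (1-\quant)\kalloc'(\quant)\bid(\quant)\,d\quant$ cancels with the first summand coming from the inference identity $\val(\quant) = \bid(\quant) + \alloc(\quant)\bid'(\quant)/\alloc'(\quant)$. Without that cancellation, the inference error would scale with $\MSEN[\bid']$, which converges at a slower rate and would also require density-continuity assumptions. The other nontrivial ingredient is Lemma lem:Z-bound-1, which is what makes Lemma lem:Z-bound-2 available: because $Z_k$ has a single local maximum on $[0,1]$ when $\alloc$ is a convex combination of multi-unit allocation rules, the total variation of $Z_k$ weighted by $\alloc$ is controlled by the peak value rather than by some uncontrolled oscillation.
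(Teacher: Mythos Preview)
Your proposal is correct and is essentially the paper's own argument: plug the empirical bid function into the linear-in-$\bid$ representation $\murevk = \expect[\quant]{-\alloc(\quant)Z'_k(\quant)\bid(\quant)}$, bound the resulting error by $\MSEN[\bid]\cdot\int_0^1 \alloc(\quant)|Z'_k(\quant)|\,d\quant$, and then invoke Lemma~\ref{error bid function}(i) for the first factor and Lemma~\ref{lem:Z-bound-2} for the second. One small slip: $\alloc(q_k^\ast)Z_k(q_k^\ast)$ is not literally $\sup_\quant \alloc(\quant)Z_k(\quant)$ since $q_k^\ast$ maximizes $Z_k$ rather than $\alloc Z_k$, but it is bounded above by that supremum, which is all you need.
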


\subsection{Revenue versus inference tradeoff for rank-based auctions}

We now consider optimizing for expected revenue over the class of rank
based auctions subject to good inferability of the parameters
$P_k$. Recall that the revenue of a rank based auction with
position weights $\wals$ and marginal weights
$\margwals=\walk-\walk[k+1]$ is given by $\sum_{k} \margwalk P_k$. On
the one hand, estimating the $P_k$'s well is important to be able to
optimize $\wals$ -- we should place the most marginal weight on
positions with high $P_k$'s. On the other hand, the weights $\wals$
determine the allocation rule $\alloc$ as a weighted sum of the
$k$-unit allocation rules $\kalloc$, which in turn via
Theorems~\ref{th: all pay} and \ref{th: first price} determine the
error in the $P_k$'s -- we should ensure that all positions get some
minimum marginal weight. This is the problem of finding the optimal
$\epsilon$ strictly monotone rank-based auction that we discussed in
Section~\ref{sec:strict-opt} (see
Theorem~\ref{thm:rank-based-opt-strict}).

We now claim that the auction returned by
Theorem~\ref{thm:rank-based-opt-strict} obtains revenue close to the
optimal rank-based auction. In particular, one way of obtaining an
$\epsilon$ strictly monotone auction given the estimates
$\widehat{P_k}$ is to run the optimal auction with probability
$1-\eps$ and with probability $\eps$ run the auction that assigns
equal marginal weight to every position. In particular, for every $k$,
$\margwalk\ge \eps/n$. For this auction, recall that $\alloc = \sum_k
\margwalk \kalloc$, and $\alloc' = \sum_k \margwalk
\kalloc'$. Therefore, for any quantile $\quant$,
$$ \kalloc'(\quant)/\alloc'(q)\le 1/\margwalk \le n/\eps.$$ We obtain
the following theorem.

\begin{theorem}
\label{thm:param-rate}
  For every $\eps>0$, there exists a rank based auction on $n$
  agents that obtains a $1-\eps$ approximation to the optimal 
  rank based revenue. Furthermore, from $\samples$ samples of the bid
  distribution, we can estimate parameters $P_k$ for $k\in [n]$ with
  error bounds as below:
\begin{align*}
\text{For the first price format: } & 
|\emurevk-\murevk|\le
\sqrt{\frac{2}{\samples}}\,\,\frac{n}{\eps}\,\,\sup_{\quant}\left\{ \frac{\alloc'(\quant)}{\alloc(\quant)}\right\} \\
\text{For the all pay format: } & 
|\emurevk-\murevk|\le
\sqrt{\frac{2}{\samples}}\,\,\frac{n}{\eps}\,\,\sup_{\quant}\left\{ \alloc'(\quant)\right\} 
\end{align*}
\end{theorem}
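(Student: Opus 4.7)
The plan is to construct the desired auction as an explicit mixture and then ride the two theorems already proved (Theorem~\ref{th: all pay} and Theorem~\ref{th: first price}) to get the error bounds. Concretely, I let $\margwals^*$ denote the marginal position weights of the optimal rank-based auction from Theorem~\ref{thm:rank-based-opt}, and define a new rank-based auction whose marginal weights are
\[
\margwalk \;=\; (1-\eps)\,\margwalk^{*} \;+\; \frac{\eps}{n} \qquad \text{for all } k \in [n].
\]
Because rank-based auctions are closed under convex combinations, this is a valid rank-based auction. Operationally it amounts to running the optimal rank-based auction with probability $1-\eps$ and, with probability $\eps$, running the auction that places equal marginal weight $1/n$ on every rank.

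The revenue claim follows immediately from linearity. The expected revenue of any rank-based auction is $\sum_k \margwalk P_k$, so the revenue of the mixture is $(1-\eps)\sum_k \margwalk^{*} P_k + \eps \cdot \frac1n \sum_k P_k$. The first term is $(1-\eps)$ times the optimal rank-based revenue; the second term is non-negative (each $P_k \geq 0$). Hence the mixture achieves at least a $(1-\eps)$ fraction of the optimal rank-based revenue.

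For the inference bound, I observe that the allocation rule of the mixture is $\alloc(\quant) = \sum_k \margwalk \kalloc(\quant)$, and therefore $\alloc'(\quant) = \sum_k \margwalk \kalloc'(\quant)$. Since every $\margwalk \geq \eps/n$ and every $\kalloc'(\quant) \geq 0$, dropping all but the $k$th term in the sum yields $\alloc'(\quant) \geq (\eps/n)\,\kalloc'(\quant)$, i.e.
\[
\frac{\kalloc'(\quant)}{\alloc'(\quant)} \;\leq\; \frac{n}{\eps} \qquad \text{for every } k \text{ and every } \quant.
\]
Plugging this into Theorem~\ref{th: all pay}, and using $(1-\quant) \leq 1$, gives the all-pay bound $\MSEN[\murevk] \leq \sqrt{2/\samples}\,(n/\eps)\,\sup_\quant \alloc'(\quant)$; plugging it into Theorem~\ref{th: first price} gives the first-price bound $\MSEN[\murevk] \leq \sqrt{2/\samples}\,(n/\eps)\,\sup_\quant\{\alloc'(\quant)/\alloc(\quant)\}$.

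There is no real obstacle: every step is a direct manipulation. The one thing to double-check is that the two bounds cited, Theorems~\ref{th: all pay} and \ref{th: first price}, indeed apply to $\alloc$, which requires $\alloc$ to be a convex combination of the multi-unit allocation rules $\kalloc$. This is immediate from the definition of the mixture, since $\sum_k \margwalk = 1$ (both $\margwals^*$ and the uniform $(1/n,\ldots,1/n)$ are probability vectors over ranks in the sense required by the multi-unit basis).
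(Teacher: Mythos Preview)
Your proposal is correct and follows essentially the same approach as the paper: construct the mixture $(1-\eps)\cdot\text{optimal}+\eps\cdot\text{uniform}$, observe that each marginal weight satisfies $\margwalk\ge\eps/n$ so that $\kalloc'(\quant)/\alloc'(\quant)\le n/\eps$, and plug into Theorems~\ref{th: all pay} and~\ref{th: first price}. The only (harmless) slip is in the last paragraph: the marginal weights $\margwals^{*}$ need not sum to exactly $1$ (they sum to $\walk[1]^{*}\le 1$), but the cited theorems only require $\alloc$ to be a non-negative combination of the $\kalloc$'s, which your mixture certainly is.
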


We remark that while the theorem above gives the same upper bound on
the error in estimation for $P_k$s for both the first price and all
pay auction formats, comparing the bounds in 
Theorems~\ref{th: all pay} and \ref{th: first price} 
shows that the first price format is
better at inference than the all pay format.  Note that the error bound
can be made arbirarily small by picking a large enough sample size
$\samples$.

\section{Inferring the revenue curve}\label{MWN}

\def \std {\text{std}}
\def \eps {\epsilon}
\def \cone {C_1}
\def \ctwo {C_2}
\def \cthree {C_3}

In the previous section we considered the problem of inferring the
parameters of the position auction from samples obtained from a first
price and an all-pay auction.  We now consider the problem of infering
the entire revenue curve $\rev(\cdot)$ from bid samples by first
infering the value distribution. This is relevant, for instance, if we
want to estimate the revenue of an arbitrary mechanism, and if we want
to optimize for revenue over the class of all mechanisms. We find that
the inference problem becomes harder and a tight bound on revenue
requires polynomially more samples as compared to the previous
setting. 

\subsection{Propagation of errors in inference of value distribution}

In order to infer the value distribution from the bid distribution, as
given in Section~\ref{s:prelim} by equations \eqref{eq:fp-inf} and
\eqref{eq:ap-inf}, we need to estimate the derivative of the bid
function $b(\cdot)$. Let $G(\cdot)$ denote the c.d.f. for bids, that is, $G(z)=b^{-1}(z)$
is the probability that a random bid is no more than $z$. Let
$g(z) = \frac{d}{dz} b^{-1}(z)$ denote the corresponding density
function. Note that $g(b(q)) = 1/b'(q)$. We will therefore focus on
estimating $g(\cdot)$.




The density $g(\cdot)$ cannot be estimated directly from the empirical
bid distribution of equation \eqref{bid function} because the
derivative of that distribution is undefined. Nonetheless, a number of
standard estimators are available to estimate $g(\cdot)$. Using such
an estimator, $\hat{g}(\cdot)$, we obtain an estimator for the
derivative as follows:
\begin{equation}\label{derivative:estimator}
\ebid'(\quant)=1/\ghat(\ebid(\quant)).
\end{equation}

In Appendix~\ref{s:b-prime-estimator} we formally state the requirements for an estimator
of $b'(\cdot)$.
We assume that we know the rate of
convergence for this estimator, i.e. the sequence $\rate$ with
$\rate \rightarrow \infty$ as we obtain more samples, such that:
$$
E\left[\sup\limits_b|\ghat(b)-g(b))|^2\right]^{1/2}=O(1/\rate).
$$
 For instance, if one uses the histogram-based
estimator for the density of bids, then
$$
\frac{1}{\hat{b}'(q)}=\frac{1}{\samples h}\sum\limits^{\samples}_{i=1}
{\bf 1}\left\{|\hat{b}(q)-b_{i}|\leq h\right\},
$$
where $h$ is the bandwidth, which is selected such that 
$h \,\samples/\log(\samples) \rightarrow \infty$ as we get more samples. 
In this case given that the class
of indicators ${\bf 1}\left\{|b-t|\leq h\right\}$ when
$t \in [b-\epsilon,b+\epsilon]$
(which depends on $\samples$) has a metric entropy of order $O(\epsilon)$,
then the estimator for the derivative of the quantile function
of the bid distribution converges at rate $\rate=\sqrt{\samples\,h}$.

Functional objects, such as distribution densities,
can be estimated using many estimators. If we
restrict ourselves to feasible estimators (that are
completely data-driven) and avoid oracle estimators,
we can talk about an feasible estimator that achieves
the fastest convergence rate. Such
a convergence rate is called the {\it optimal convergence rate}. 
\cite{stone} established the optimal conevergence rate for
estimation of one-dimensional density, which we formulate
here without proof.

\begin{lemma}\label{optimal rate}
Suppose that the density of bids $g(\cdot)$ has $k$ derivatives.
Then the optimal convergence rate for the estimator 
for the density $\ghat(\cdot)$ is $\rate=\samples^{k/(1+2k)}$.
\end{lemma}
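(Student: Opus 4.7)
\begin{proofsketch}
The statement is the classical minimax rate of Stone for one-dimensional density estimation, so my plan is to prove the two matching halves of a minimax result: a constructive upper bound exhibiting an estimator that attains the rate $\samples^{k/(1+2k)}$, and an information-theoretic lower bound showing no estimator can do strictly better over the class of densities with $k$ bounded derivatives. Before either half I would fix the smoothness class precisely: say, all densities $g$ supported on a compact interval (the support of the bid distribution) whose $k$th derivative is bounded in sup norm by some constant $L$, so that $g$ admits a Taylor expansion of order $k$ with controlled remainder.

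For the upper bound I would use a kernel density estimator $\ghat(z) = \frac{1}{\samples h}\sum_{i=1}^{\samples} K\!\left(\frac{z-\sbidi}{h}\right)$ based on a kernel $K$ of order $k$, meaning $\int K = 1$ and $\int t^{j}K(t)\,dt = 0$ for $j=1,\ldots,k-1$. A standard bias/variance decomposition, using the Taylor expansion of $g$ around $z$ and the vanishing moments of $K$, yields bias $O(h^{k})$ and pointwise variance $O(1/(\samples h))$. The sup-norm version only loses a logarithmic factor by a uniform entropy argument on the indicator class that appears in the histogram example already given in the paper. Balancing $h^{2k}$ against $1/(\samples h)$ gives the optimal bandwidth $h \asymp \samples^{-1/(2k+1)}$, and plugging back yields mean squared error of order $\samples^{-2k/(2k+1)}$, which is precisely $\rate^{-2}$ for $\rate = \samples^{k/(2k+1)}$.

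For the lower bound I would reduce minimax estimation to a two-hypothesis testing problem via Le Cam's method. Fix a smooth compactly supported bump $\psi$, and consider the two densities $g_{0}$ (a fixed smooth baseline density on the interval) and $g_{1}(z) = g_{0}(z) + c\,h^{k}\psi\!\left(\frac{z-z_{0}}{h}\right)$, where $c$ is chosen small enough that $g_1$ lies in the smoothness class and remains a density. The two densities differ by $\Theta(h^{k})$ in sup norm at $z_{0}$, while the KL divergence between the product measures on $\samples$ samples is $O(\samples h \cdot h^{2k}) = O(\samples h^{2k+1})$. Choosing $h \asymp \samples^{-1/(2k+1)}$ makes the total variation between the two product measures bounded away from one, so by Le Cam's lemma no estimator can achieve sup-norm MSE better than $\Omega(h^{k}) = \Omega(\samples^{-k/(2k+1)})$, matching the upper bound.

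The main obstacle I anticipate is the lower-bound half: the two-point construction above gives the right pointwise rate, but transferring it to a lower bound for the sup-norm risk (as the paper's MSE definition demands) requires either a careful localization argument or an Assouad-style multi-bump hypothesis cube indexed by disjoint translates of $\psi$, with $m \asymp 1/h$ bumps each flipped independently; one then applies Fano or Assouad to a packing in Hamming distance. Verifying that the flipped densities remain valid elements of the smoothness class, and that the pairwise KL divergences stay $O(1/m)$, is the technical step that requires the most care, but once it is in place the rate $\rate = \samples^{k/(1+2k)}$ follows immediately.
\end{proofsketch}
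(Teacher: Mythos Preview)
The paper does not prove this lemma at all: immediately before the statement it writes ``\citet{stone} established the optimal convergence rate for estimation of one-dimensional density, which we formulate here without proof.'' So there is no proof in the paper to compare your proposal against; the lemma is simply quoted as Stone's classical result.

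Your sketch is the textbook route to that result and is essentially correct in outline: a kernel estimator of order $k$ with bandwidth $h\asymp \samples^{-1/(2k+1)}$ gives the upper bound, and a multi-bump Assouad/Fano construction gives the matching lower bound. One quibble: the paper's $\MSE$ is defined in sup norm, and for the sup-norm risk the optimal rate actually carries an unavoidable $\log \samples$ factor (the rate $\samples^{k/(2k+1)}$ is for pointwise or $L_2$ risk). You allude to this when you say the sup-norm upper bound ``only loses a logarithmic factor,'' but then your lower bound construction does not recover that factor, so as written the two halves do not quite match. If you want them to match you need roughly $m\asymp 1/h$ disjoint bumps in the Assouad cube and to track the extra $\log m$ that Fano's inequality produces; alternatively, you can state the result at the pointwise-risk level, which is what Stone's theorem literally gives and is all the paper actually uses downstream.
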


This theorem implies that there is a lower bound on the convergence
rate equal to ${\samples}^{1/3}$ for estimation of distribution densities
that have one derivative. At the same time, for functions that are
very smooth, the optimal convergence rate can approach the maximum
rate of ${\samples}^{1/2}$. 

We now establish the mean-squared error of the estimator for the
derivative of the bid distribution.
\begin{theorem}\label{th:derivative}
The mean-squared error for the estimator (\ref{derivative:estimator}) for the derivative of the 
bid function at quantile $\quant$ can be represented as
$$
\MSEN[b'(q)]=E\left[(\ebid'(\quant)-\bid'(\quant))^2\right]^{1/2}
=O\left(
\frac{\bid'(\quant)^2}{\rate}+\frac{|\bid^{\prime\prime}(\quant)|}{\sqrt{2 \samples}}
\right)
$$
\end{theorem}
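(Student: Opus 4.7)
The plan is to decompose the estimation error into two pieces, one controlled by the convergence of the density estimator $\ghat$ and one controlled by the convergence of the empirical quantile function $\ebid$, then bound each piece using the identity $\bid'(q) = 1/g(\bid(q))$.

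First, I would write
\begin{equation*}
\ebid'(q) - \bid'(q) \;=\; \Bigl[\tfrac{1}{\ghat(\ebid(q))} - \tfrac{1}{g(\ebid(q))}\Bigr] \;+\; \Bigl[\tfrac{1}{g(\ebid(q))} - \tfrac{1}{g(\bid(q))}\Bigr],
\end{equation*}
and apply the triangle inequality in the $L^2$ norm. For the first bracket, the mean-value form gives
$\bigl|\tfrac{1}{\ghat} - \tfrac{1}{g}\bigr| \le |\ghat - g|/(g \cdot \ghat)$, which, using $1/g(\bid(q)) = \bid'(q)$ together with the uniform convergence of $\ghat$ at rate $\rate$, contributes a term of order $\bid'(q)^2/\rate$. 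This is the source of the first summand in the claimed bound.

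For the second bracket, I would apply a first-order Taylor expansion of $1/g(\cdot)$ around $\bid(q)$:
\begin{equation*}
\tfrac{1}{g(\ebid(q))} - \tfrac{1}{g(\bid(q))} \;=\; -\tfrac{g'(\bid(q))}{g(\bid(q))^2}\bigl(\ebid(q) - \bid(q)\bigr) + \text{h.o.t.}
\end{equation*}
Differentiating $\bid'(q) = 1/g(\bid(q))$ with respect to $q$ yields $\bid''(q) = -g'(\bid(q))\,\bid'(q)^3$, so the leading coefficient equals $\bid''(q)/\bid'(q)$. Combined with Lemma~\ref{error bid function}, which gives $E[\sup_q|\ebid(q)-\bid(q)|^2]^{1/2} = O(\bid'(q)/\sqrt{2\samples})$, this bracket contributes $O(|\bid''(q)|/\sqrt{2\samples})$, matching the second summand.

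Finally, I would combine the two bounds via the triangle inequality on $E[\,\cdot\,^2]^{1/2}$. The main technical obstacle is that $\ghat$ and $\ebid$ are not independent (they may even be constructed from the same samples), so one must be careful to use Minkowski's inequality rather than any variance decomposition that assumes independence; a secondary subtlety is that the Taylor remainder in the second step needs to be shown negligible relative to the leading order, which requires only a bound of the form $o(1/\sqrt{\samples})$ on $(\ebid(q)-\bid(q))^2$ times a local bound on $(1/g)''$. These are both standard and follow from the assumed smoothness of $g$ and the convergence rate of $\ebid$ already established.
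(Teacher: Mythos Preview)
Your proposal is correct and follows essentially the same route as the paper: the identical add-and-subtract decomposition via $1/g(\ebid(q))$, a Taylor/mean-value bound on each bracket, the identity $\bid''(q)=-g'(\bid(q))\,\bid'(q)/g(\bid(q))^{2}$ to rewrite the coefficient, and the invocation of Lemma~\ref{error bid function} for the quantile error. If anything, your explicit mention of Minkowski's inequality and the control of the Taylor remainder is slightly more careful than the paper's ``with probability approaching~1'' phrasing.
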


%


The value function can now be estimated using Equation
\eqref{eq:fp-inf} or \eqref{eq:ap-inf}, as applicable.

\subsection{Inference from an all pay auction}

Recall that for rank based mechanisms, we know the allocation function
$\alloc(q)$ and its derivative $x'(q)$ precisely. 
For all pay auctions Equation~\eqref{eq:ap-inf} allows us to relate
the value distribution to the bid distribution: $v(q)=b'(q)/x'(q)$.

Let $\MSEN[b'(q)]=E\[(\hat{b}'(q)-b'(q))^2\]^{1/2}$ and 
$\MSEN[v(q)]=E\[|\vhat(q)-v(q)|^2\]^{1/2}$ denote the mean squared
error in $b'$ and $v$ respectively.
%
%
%

We can express the mean squared error in $\rev$,
$\MSEN[R(q)]=E\[(\hat{R}(q)-R(q))^2\]^{1/2}$, in terms of the error in
$b'$ as follows.
We write the revenue at quantile $q$ as $R(q) = (1-q)v(q)$ meaning that 
we can estimate the revenue as $\hat{R}(q)=(1-q)\vhat(q)$ by replacing
the true value with its estimated counterpart. 
then
$$
\frac{\MSEN[R(q)]}{R(q)}=\frac{(1-q)\MSEN[v(q)]}{(1-q)v(q)} 
=\frac{\MSEN[b'(q)]/x'(q)}{b'(q)/x'(q)}= \frac{\MSEN[b'(q)]}{b'(q)}.
$$ 

Our goal is to bound this relative error by a quantile-dependent
quantity, $\eps(q)$.
Formally, we require that
$$
\frac{\MSEN[R(q)]}{R(q)} \leq \eps(q),
$$
assuming that $R(q)>0$. Our discussion above demonstrates that this
error in turn can be expressed in the relative error of estimation of
the derivative of the bid function $\MSEN[b'(q)]/b'(q)$. We can now use
Theorem~\ref{th:derivative} to obtain an error bound for the revenue
curve\footnote{Proofs for this section can be found in the appendix.}
and derive conditions on the allocation rule $x$ that bound the
relative error in $b'(q)$ by $\eps(q)$ at all quantiles $q$:




\begin{theorem}\label{theorem1}
 Suppose that the allocation rule $x(\cdot)$ of an all-pay auction along with its second and first
  derivatives satisfies for all quantiles $q$:
$$
\frac{\Omega(1)|\alloc^{\prime\prime}(\quant)|}{\eps(q) \sqrt{\samples}} \leq x'(q) \leq
O(1)\frac{\eps(q)\rate}{v(q)}, \; \text{and, }
 \samples\geq \frac12\left(\frac{v'(\quant)}{v(\quant)\epsilon(\quant)}\right)^2.
$$
where $\rate$ is the convergence rate for the estimator of the bid
density $g(\cdot)$.
Then, the relative error in estimating the revenue curve from
$\samples$ samples of the bid distribution is bounded by function
$\eps(q)$.
\end{theorem}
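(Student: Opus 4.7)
The plan is to reduce the relative error in $\hat{\rev}$ to the relative error in $\hat{b}'$ using the all-pay inference identity $v(q) = b'(q)/x'(q)$, then feed that into Theorem~\ref{th:derivative} and verify that each of the three error terms that emerge is controlled by one of the three stated hypotheses.

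First, I would set the estimator to be $\hat{\rev}(q) = (1-q)\hat v(q)$ with $\hat v(q) = \hat b'(q)/x'(q)$, using that $x'(\cdot)$ is determined by the rank-based auction and is therefore known exactly, not estimated. Since the $(1-q)$ and the $x'(q)$ factors are deterministic, the relative error telescopes:
\begin{equation*}
\frac{\MSEN[R(q)]}{R(q)} \;=\; \frac{(1-q)\MSEN[v(q)]}{(1-q)v(q)} \;=\; \frac{\MSEN[b'(q)]/x'(q)}{b'(q)/x'(q)} \;=\; \frac{\MSEN[b'(q)]}{b'(q)},
\end{equation*}
exactly as derived in the text immediately preceding the theorem. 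So it suffices to show that the assumptions force $\MSEN[b'(q)]/b'(q) \le \eps(q)$.

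Second, I would invoke Theorem~\ref{th:derivative}, which gives
\begin{equation*}
\MSEN[b'(q)] \;=\; O\!\left( \frac{b'(q)^2}{\rate} + \frac{|b''(q)|}{\sqrt{2\samples}} \right).
\end{equation*}
Dividing by $b'(q) = v(q)\,x'(q)$ yields two summands. For the second summand I would differentiate the identity $b'(q) = v(q)x'(q)$ once more to get $b''(q) = v'(q)x'(q) + v(q)x''(q)$, and apply the triangle inequality to obtain
\begin{equation*}
\frac{|b''(q)|}{v(q)\,x'(q)} \;\le\; \frac{|v'(q)|}{v(q)} + \frac{|x''(q)|}{x'(q)}.
\end{equation*}
Combining, the relative error is dominated by three terms:
\begin{equation*}
\frac{\MSEN[b'(q)]}{b'(q)} \;=\; O\!\left( \frac{v(q)\,x'(q)}{\rate} \;+\; \frac{|v'(q)|}{v(q)\sqrt{2\samples}} \;+\; \frac{|x''(q)|}{x'(q)\sqrt{2\samples}} \right).
\end{equation*}

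Third, I would match each of the three terms to one of the three hypotheses of the theorem. The upper bound $x'(q) \le O(1)\,\eps(q)\rate/v(q)$ controls the first term at $O(\eps(q))$; the lower bound $x'(q) \ge \Omega(1)|x''(q)|/(\eps(q)\sqrt{\samples})$ controls the third; and the sample-size condition $\samples \ge \tfrac12\bigl(v'(q)/(v(q)\eps(q))\bigr)^2$ controls the middle. Summing gives the advertised bound.

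The main obstacle is the middle step: the estimator is of $b'$, yet the bound must ultimately be phrased in terms of the allocation rule $x$ and the underlying value function $v$, which requires carefully differentiating the inference identity through and then splitting $b''$ via the triangle inequality without losing anything essential. A minor but related subtlety is tracking the absolute-constant factors so that the $\Omega(1)$ and $O(1)$ in the hypotheses are exactly what is needed to absorb the hidden constants from Theorem~\ref{th:derivative}; this is routine once the three-term decomposition is in hand.
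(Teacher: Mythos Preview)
Your proposal is correct and follows essentially the same approach as the paper: substitute $b'(q)=v(q)x'(q)$ and $b''(q)=v'(q)x'(q)+v(q)x''(q)$ into Theorem~\ref{th:derivative}, divide by $b'(q)$, split via the triangle inequality into the three terms $\frac{v(q)x'(q)}{\rate}$, $\frac{|x''(q)|}{x'(q)\sqrt{\samples}}$, and $\frac{v'(q)}{v(q)\sqrt{\samples}}$, and match each to one of the three hypotheses. The paper's proof is identical in structure and content.
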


Let us consider the bounds on $x'$ closely. The lower bound of this
expression is determined by the curvature of the allocation function
and the number of samples. It requires the allocation rule to
``separate" bids within the range
$|\alloc^{\prime\prime}(\quant)|/\sqrt{2 \samples}$.  The upper bound
in this expression is driven by the sampling noise in the inference of
the density of bids: if the allocation rule ``jumps'' at a certain
quantile, the density of bids at that quantile is low and the relative
error in bid density due to sampling becomes large. Note also that the
lower bound on the number of samples required is determined by the
slope of the value function, with more samples required for more
concentrated distributions.

From the upper bound on $x'(q)$ we note that the error in estimating
the revenue curve is at best $v(q)x'(q)/\rate$, that is, we estimate
the revenue at a rate of at most $\rate$.\footnote{This
  simplification ignores whether an allocation rule achieving this
  rate exists.} Recall from Lemma~\ref{optimal rate} that $\rate$ can be as small
as $N^{1/3}$ if $x$ is not sufficiently smooth. This rate is much
slower than the parametric convergence rate $\sqrt{\samples}$ derived
in Theorem~\ref{thm:param-rate}
and means that inference for the multi-unit revenues can be performed with
much fewer samples than inference for general mechanisms.


\subsection{First price auctions}
The analysis for the first-price auctions follows closely
our analysis for all-pay auctions.
Recall that the value function can be obtained
from the bid function and its derivative as 
$$
v(q) = b(q)+\frac{x(q)}{x'(q)}b'(q).
$$
The value function is estimated by replacing the 
bid function and its derivative with their estimated
counterparts.
We further notice that the relative impact on the expected revenue
can be bounded in the same fashion as for the all-pay auctions, meaning
that $$
\frac{\MSEN[R(q)]}{R(q)}=\frac{\MSEN[v(q)]}{v(q)}.
$$
This allows us to write an analog of Theorem \ref{theorem1} for
the first-price auctions.

\begin{theorem}\label{theorem2}
Suppose that the allocation rule $x(\cdot)$ of a first-price auction 
along with its second and first derivatives satisfies for all
quantiles $q$:
$$
\frac{\Omega(1)|\alloc^{\prime\prime}(\quant)|}{\sqrt{\samples}\eps(\quant)}
\leq x'(\quant) \leq O(1) \alloc(\quant) \rate \eps(\quant),\;\; \samples \geq \frac12
\left(
\frac{v'(\quant)}{\epsilon(\quant)}
\right)^2.
$$
where $\rate$ is the convergence rate for the estimator of the bid
density $g(\cdot)$.
Then the relative error in estimating the revenue curve from
$\samples$ samples of the bid distribution is bounded by function $\eps(q)$.
\end{theorem}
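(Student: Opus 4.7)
The strategy mirrors the proof of Theorem~\ref{theorem1}. Because $\hat{R}(q)=(1-q)\hat{v}(q)$, the relative revenue error equals the relative value error, $\MSEN[R(q)]/R(q)=\MSEN[v(q)]/v(q)$, so it suffices to show that the plug-in value estimator
$$
\hat{v}(q)=\hat{b}(q)+\tfrac{x(q)}{x'(q)}\,\hat{b}'(q)
$$
satisfies $\MSEN[v(q)]\le\epsilon(q)\,v(q)$. Here $x,x'$ are known exactly for rank-based auctions, and $\hat{b},\hat{b}'$ come from Lemma~\ref{error bid function} and equation~\eqref{derivative:estimator}. The triangle inequality then reduces the task to bounding each summand of
$$
\MSEN[v(q)]\;\le\;\MSEN[b(q)]\;+\;\tfrac{x(q)}{x'(q)}\,\MSEN[b'(q)].
$$

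For the second summand, Theorem~\ref{th:derivative} yields $\MSEN[b'(q)]=O(b'(q)^2/\rate + |b''(q)|/\sqrt{N})$. Rearranging equation~\eqref{eq:fp-inf} gives the first-price equilibrium identity $b'(q)=(v(q)-b(q))\,x'(q)/x(q)$; using $v(q)-b(q)\le v(q)\le 1$, multiplying by $x(q)/x'(q)$ reduces the $b'^2/\rate$ piece to at most $v(q)^2\,x'(q)/(x(q)\,\rate)$, which is $\le\epsilon(q)v(q)$ under the stated upper bound $x'(q)\le O(1)\,x(q)\,\rate\,\epsilon(q)$ (absorbing $v\le 1$). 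Differentiating the identity expresses $b''(q)$ through $v',v,x,x',x''$; after multiplication by $x(q)/(x'(q)\sqrt{N})$ the $|b''|/\sqrt{N}$ piece splits cleanly into a term of order $|v'(q)|/\sqrt{N}$, absorbed by the sample-size bound $N\ge\tfrac12(v'(q)/\epsilon(q))^2$, and a term of order $v(q)\,|x''(q)|/(x'(q)\sqrt{N})$, absorbed by the curvature bound $x'(q)\ge\Omega(1)\,|x''(q)|/(\sqrt{N}\,\epsilon(q))$.

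The first summand, $\MSEN[b(q)]$, has no analogue in the all-pay proof. Lemma~\ref{error bid function}(i) controls it by $O(x'(q)/(x(q)\sqrt{N}))$, which under the upper bound on $x'(q)$ simplifies to $O(\rate\,\epsilon(q)/\sqrt{N})$; since $\rate/\sqrt{N}\to 0$ for any non-parametric density estimator (Lemma~\ref{optimal rate}), this contribution is a lower-order correction that imposes no stricter requirement than the upper bound already demanded. The main obstacle is the algebraic bookkeeping: expanding $b''(q)$ by the product rule without losing the cancellations that isolate the $v'$-piece from the $x''$-piece, and verifying that the extra additive $\hat{b}$ summand absent in the all-pay case truly defers to the existing conditions rather than forcing a strictly tighter bound on $x'$. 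Summing the three pieces and dividing by $v(q)$ yields $\MSEN[v(q)]/v(q)\le\epsilon(q)$, equivalently $\MSEN[R(q)]/R(q)\le\epsilon(q)$, completing the proof.
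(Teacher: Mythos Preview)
Your approach is essentially the paper's: reduce to $\MSEN[v(q)]/v(q)$, invoke Theorem~\ref{th:derivative}, and substitute the first-price identities for $b'$ and $b''$. One algebraic point needs correction, though. Differentiating $b'(q)=(v(q)-b(q))\,x'(q)/x(q)$ produces \emph{three} terms, not two:
\[
b''(q)=v'(q)\,\frac{x'(q)}{x(q)}+(v(q)-b(q))\,\frac{x''(q)}{x(q)}-2(v(q)-b(q))\left(\frac{x'(q)}{x(q)}\right)^{2},
\]
where the last term arises because the product rule hits both the $-b(q)$ factor (giving $-b'(q)\,x'/x$) and the $1/x(q)$ factor (giving $-(v-b)(x')^2/x^2$), and these combine rather than cancel. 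After multiplying by $x(q)/x'(q)$ this third piece contributes $O\!\big((v-b)\,x'(q)/(x(q)\sqrt{N})\big)$, which is precisely the same shape as the $\MSEN[b(q)]$ piece you already analyze; your argument that $x'(q)\le O(1)\,x(q)\,\rate\,\epsilon(q)$ together with $\rate/\sqrt{N}\to 0$ makes it lower order therefore covers it as well. So the omission does not break your conclusion, but the claim that $b''$ ``splits cleanly'' into a $v'$-piece and an $x''$-piece is inaccurate and should be amended.

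Conversely, the paper's proof lists this $x'(q)/(x(q)\sqrt{N})$ contribution as a separate fourth inequality to be checked, while you observe it is implied by the upper bound on $x'$; the paper also does not separately account for the additive $\MSEN[b(q)]$ term you handle. In this respect your write-up is slightly more careful than the paper's, at the cost of the missing $b''$ summand above.
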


Once again from the upper bound on $x'(q)$, we get that the error
$\eps(q)$ is at least $x'(q)/(\rate x(q))$, that is, the revenue
curve can be estimated at best at a rate of $\rate$, which by
Lemma~\ref{optimal rate} can be as small as $N^{1/3}$.




\section{Discussion and Conclusions}

We conclude with some observations and discussion.
\begin{itemize}
\item Good inference requires careful design of the mechanism. Perfect
  inference and perfect optimality cannot be achieved together.

\item We cannot achieve good accuracy in infering the revenue of an
  arbitrary mechanism, or in infering the entire revenue curve. In
  contrast, the multi-unit revenues $P_k$ are special functions that
  depend linearly on the bid distribution (and not, for example, on
  bid density). This property enables them to be learned accurately.

\item Rank based mechanisms achieve a good tradeoff between revenue
  optimality and quality of inference in position environments: (1)
  They are close to optimal regardless of the value distribution;
  (2) Optimizing over this class for revenue requires estimating only
  $n$ parameters $P_k$ that, by our observation above, are ``easy'' to
  estimate accurately; (3) Rank based mechanisms satisfy the
  necessary conditions on the slope of the allocation function that
  enable good inference.

\end{itemize}

\bibliographystyle{acmsmall}
\bibliography{uniq,agt,references}

\appendix

\section{Finding the optimal iron by rank auction}
\label{s:iron-opt-app}

Recall that iron by rank auctions are weighted sums of multi-unit
auctions. Therefore, their revenue can be expressed as a weighted sum
over the revenues $P_k$ of $k$-unit auctions. We consider a position environment given by non-increasing weights $\wals
 = (\walk[1],\ldots,\walk[n]$), with $\walk[0] = 0$, $\walk[1]=1$, and $\walk[n+1] = 0$.  Define the cumulative position
weights $\cumwals = (\cumwalk[1],\ldots,\cumwalk[n])$ as $\cumwalk =
\sum_{j \leq i} \walk[j]$.

Define the {\em multi-unit revenue
  curve} as the piece-wise constant function connecting the points
$(0,\murevk[0],\ldots,(n,\murevk[n])$.  This function may or may not
be concave.  Define the {\em ironed multi-unit revenue curve} as
$\imurevs = (\imurevk[1],\ldots,\imurevk[n])$ the smallest concave
function that upper bounds the multi-unit revenue curve.  Define the
multi-unit marginal revenues as $\mumargs =
\mumargk[1],\ldots,\mumargk[n]$ and $\imumargs =
\imumargk[1],\ldots,\imumargk[n]$ as the left slope of the multi-unit
and ironed multi-unit revenue curves, respectively.  I.e., $\mumargk = \murevk - \murevk[k-1]$ and $\imumargk = \imurevk - \imurevk[k-1]$.

We now see how the revenue of any position auction can be expressed in
terms of the multi-unit revenue curves and marginal revenues.
\begin{align*}
\expect{\text{revenue}} &= \sum_{k=0}^n \murevk\,\margwalk 
                         = \sum_{k=0}^n \mumargk\,\walk\\
                        &\leq \sum_{k=0}^n \imurevk\,\margwalk 
                         = \sum_{k=0}^n \imumargk\,\walk.
\end{align*}
The first equality follows from viewing the position auction with
weights $\wals$ as a convex combination of multi-unit auctions (where
its revenue is the convex combination of the multi-unit auction
revenues).  The second and final inequality follow from rearranging
the sum (an equivalent manipulation to integration by parts).  The
inequality follows from the fact that $\imurevs$ is defined as the
smallest concave function that upper bounds $\murevs$ and, therefore,
satisfies $\imurevk \geq \murevk$ for all $k$.  Of course the
inequality is an equality if and only if $\margwalk = 0$ for every
$k$ such that $\imumargk > \mumargk$.

We now characterize the optimal ironing-by-rank position auction.
Given a position auction weights $\wals$ we would like the
ironing-by-rank which produces $\iwals$ (with cumulative weights
satisfying $\cumwals \geq \cumiwals$) with optimal revenue.  By the
above discussion, revenue is accounted for by marginal revenues, and
upper bounded by ironed marginal revenues.  If we optimize for ironed
marginal revenues and the condition for equality holds then this is
the optimal revenue.  Notice that ironed revenues are concave in $k$,
so ironed marginal revenues are monotone (weakly) decreasing in $k$.
The position weights are also monotone (weakly) decreasing.  The
assignment between ranks and positions that optimizes ironed marginal
revenue is greedy with positions corresponding to ranks with negative
ironed marginal revenue discarded.  Tentatively assign the $k$th rank
agent to slot $k$ (discarding agents that correspond to discarded
positions).  This assignment indeed maximizes ironed marginal revenue
for the given position weights but may not satisfy the condition for
equality of revenue with ironed marginal revenue.  To meet this
condition with equality we can randomly permute (a.k.a., iron by rank)
the positions that corresponds to intervals where the revenue curve is
ironed.  This does not change the surplus of ironed marginal revenue
as the ironed marginal revenues on this interval are the same, and the
resulting position weights $\iwals$ satisfy the condition for equality
of revenue and ironed marginal revenue.

\section{Requirements for the estimator of the derivative of the bid function}
\label{s:b-prime-estimator}

In general,
we can express $$
p'(b)=\Phi_n(b;G,g),\;\;\mbox{and}\;\;
x'(b)=\Psi_n(b;G,g),
$$
where $G(\cdot)$ is the cdf of the distribution of bids
and $g(\cdot)$ is the pdf. For instance, when the mechanism $\mathcal M$
is the first-price auction, then $\Phi_n(b;G,g)=n\,G^{n-1}(b)\,g(b)$
and $\Psi_n(b;G,g)=G^{n}(b)+n\,b\,G^{n-1}(b)g(b)$. For the all-pay
auction $\Phi_n(b;G,g)=n\,G^{n-1}(b)\,g(b)$ and $\Psi_n(b;G,g)=1$.

\begin{assumption}\label{assume1}
Suppose that 
\begin{itemize}
\item[(i)] Suppose that the density of bids $g$ is bounded by a universal constant $\bar g$. 
There exists an estimator $\widehat{g}$ for which 
$r\,(\widehat{g}-g)(\cdot)$ converges to a tight stochastic process
with convergence rate $r$ such that $r \rightarrow \infty$  $r/\sqrt{nT} \rightarrow 0$
\item[(ii)] $\Phi_n(b;G,g)$ and $\Psi_n(b;G,g)$ are smooth 
functionals of $G$ and $g$ for each $b$ such that for any two pairs
$(G_1,g_1)$ and $(G_2,g_2)$ with $\|G_1-G_2\|\leq \varepsilon_1$
and $\|g_1-g_2\|\leq \varepsilon_2$:
$$
\sup\limits_{b\in [0,\bar v]}\left|\Phi_n(b;G_1,g_1)-\Phi_n(b;G_2,g_2)\right|
\leq J_n^{\Phi,1}\varepsilon_1+J_n^{\Phi,2}\varepsilon_2
$$
and
$$
\sup\limits_{b\in [0,\bar v]}\left|\Psi_n(b;G_1,g_1)-\Psi_n(b;G_2,g_2)\right|
\leq J_n^{\Psi,1}\varepsilon_1+J_n^{\Psi,2}\varepsilon_2
$$
\end{itemize}
\end{assumption}

We imposed this high-level assumption to facilitate a wide range of
estimators that can be used to estimate the distribution and the density
of the distribution of bids.

\section{Proofs}
\begin{proofof}{Lemma~\ref{error bid function}}
Consider estimation of the bid function using the sorted 
bids $b^{(1)} \geq b^{(2)} \geq  \ldots \geq b^{(N)}$.
Then the bid function is estimated as
$$\widehat{b}(q)=b^{([qN])},$$
where $[\cdot]$ is the floor integer. We can equivalently
express this function as a solution of the following equation
$$\frac{1}{N}\sum^N_{i=1}{\bf 1}\{b_i \leq \widehat{b}(q)\}=q+o_p(1/\sqrt{N}),$$
where $o_p(1/\sqrt{N})$ corresponds to the error that arizes because the empirical
cdf is a step function and identifies the true cdf in the steps of size $1/N$.
Let $G(\cdot)$ be the cdf of bids and $\widehat{G}(\cdot)$ be the 
empirical cdf. Then the equation above can be rewritten as
$\widehat{G}(\widehat{b}(q))-G(b(q))=o_p(1/\sqrt{N}).$
Now we decompose this expression as
$$
\widehat{G}(\widehat{b}(q))-G(b(q))=\widehat{G}(\widehat{b}(q))-G(\widehat{b}(q))+{G}(\widehat{b}(q))-G(b(q)).
$$
By the Donsker theorem $\sqrt{N}(\widehat{G}(t)-G(t))$ converges to 
a tight mean zero stochastic process ${\mathbb G}(t)$ over $t$ with covariance function
such that $H(t,t)=G(t)(1-G(t))$.
Note that 
$$\sup\limits_{t}|H(t,t)| \leq \frac14.$$
This means that 
$$
E\left[\left( \sqrt{N}\sup\limits_{t}(\widehat{G}(t)-G(t)) \right)^2\right] \leq \frac14.
$$
Next, consider the following expansion:
$$
{G}(\widehat{b}(q))-G(b(q))=g(b(q))(\widehat{b}(q)-b(q))+o(|\widehat{b}(q)-b(q)|^2).$$
Combining this result together with the decomposition above and recalling
that $b'(q)=1/g(b(q))$, we write
$$
\sqrt{N}(\widehat{b}(q)-b(q))=-b'(q)\sqrt{N}(\widehat{G}(\widehat{b}(q))-G(\widehat{b}(q)))+o_p(1).
$$
Then we write
$$
E\left[\left( \sqrt{N}\sup\limits_{q}(\widehat{b}(q)-b(q))\right)^2\right] \leq \sup_qb'(q)E\left[\left( \sqrt{N}\sup\limits_{t}(\widehat{G}(t)-G(t)) \right)^2\right]^{1/2}.
$$
This means that
$$\MSEN[b] \leq \frac{\sup_qb'(q)}{2}\,\frac{1}{\sqrt{N}}.$$
Then, recalling that $v(q) \leq 1$ and $v(q)-b(q) \leq 1$, we
can replace the upper bound on $b'(q)$ by $x'(q)$ for an all-pay
auction and by $x'(q)/x(q)$ for the first-price auction.
\end{proofof}




\begin{proofof}{Lemma~\ref{lem:Z-bound-1}}
  Consider the function $A(q) = 1/Z_k(q) =
  x'(q)/(1-q)x'_k(q)$. $x'(q)$ is a weighted sum over $x'_j(q)$ for
  $j\in\{1,\cdots,n-1\}$. So, $A(q)$ is a weighted sum over terms
  $x'_j(q)/(1-q)x'_k(q)$. Let us look at these terms closely.
$$
\frac{x'_j(q)}{(1-q)x'_k(q)} = \alpha_{k,j} q^{k-j}(1-q)^{j-k-1}
$$
where $\alpha_{k,j}$ is independent of $q$. The functions
$q^{k-j}(1-q)^{j-k-1}$ are convex. This implies that $A(q)$ which is a
weighted sum of convex functions is also convex. Consequently, it has
a unique minimum. Therefore, $Z_k(q) = 1/A(q)$ has a unique maximum.
\end{proofof}

\begin{proofof}{Lemma~\ref{lem:Z-bound-2}}
Recall that $Z_i(q)$ has a single local maximum at quantile $q_i^*$. So we get 
$$
\int^1_0x(q)|Z'_i(q)|\,dq = \int^{q_i^*}_0x(q)Z'_i(q)\,dq - \int^{1}_{q_i^*}x(q)Z'_i(q)\,dq 
$$

Integrating by parts, 
$$
\int x(q)Z'_i(q)\,dq = x(q)Z_i(q) - \int
x'(q)Z_i(q)\, dq = x(q)Z_i(q) - \int (1-q) x'_i(q)\, dq
$$

Therefore, 
$$
\int^1_0x(q)|Z'_i(q)|\,dq = 2x(q_i^*)Z_i(q_i^*) - \int^{q_i^*}_0 (1-q)
x'_i(q)\, dq + \int^1_{q_i^*} (1-q) x'_i(q)\, dq
< 2x(q_i^*)Z_i(q_i^*) + 1
$$
\end{proofof}

\begin{proofof}{Theorem~\ref{th:derivative}}
Consider the difference 
$$
\bhat'(\quant)-b'(\quant)=\frac{1}{\ghat(\bhat(\quant))}-\frac{1}{g(\bhat(\quant))}+
\frac{1}{g(\bhat(\quant))}-\frac{1}{g(b(\quant))}.
$$
Using the Taylor expansion, with probability approaching 1 we can bound
$$
|\bhat'(\quant)-b'(\quant)| \leq \frac{1}{g(b(\quant))^2}|\ghat(\bhat(\quant))-g(\bhat(\quant))|
+|\frac{g'(b(\quant))}{g^2(b(\quant))}||\bhat(\quant)-b(\quant)|.
$$
Note that $b'(\quant)=1/g(b(\quant))$. Also note that if we differentiate both
sides of this expression with respect to $\quant$, we obtain
$$
b^{\prime\prime}(\quant)=g'(b(\quant))b'(\quant)/g^2(b(\quant)).
$$
By our assumption, $E[(|\ghat(\bhat(\quant))-g(\bhat(\quant)))^2]^{1/2}=O(1/\rate)$.
Also, by Lemma \ref{error bid function}, $|\bhat(\quant)-b(\quant)|=O(b'(q)/(2\sqrt{\samples}))$.
Combining these results, we obtain that 
$$
E[(\bhat'(\quant)-b'(\quant))^2]^{1/2}=O\left(
\frac{b'(\quant)^2}{\rate}+\frac{|b^{\prime\prime}(\quant)|}{2 \sqrt{\samples}}
\right).
$$
\end{proofof}

\begin{proofof}{Theorem \ref{theorem1}}
The proof of this theorem reduces to substitution of appropriate
expressions of $b'(\cdot)$ and $b^{\prime\prime}(\cdot)$
into Theorem \ref{th:derivative}.
For all-pay auctions $b'(\quant)=\alloc'(\quant)v(\quant)$
and $b^{\prime\prime}(\quant)=\alloc^{\prime\prime}(\quant)v(\quant)
\alloc'(\quant)v'(\quant)$.
Therefore, we can express
$$
\frac{MSE_{R(\quant)}(N)}{R(\quant)}=O\left(
\frac{\alloc'(\quant)v(\quant)}{\rate}
+\frac{|\alloc^{\prime\prime}(\quant)v(\quant)+
\alloc'(\quant)v'(\quant)|}{2\sqrt{\samples}\alloc'(\quant)v(\quant)}
\right).
$$
We guarantee the bound $\epsilon(\quant)$ for this expression
if each term is bounded by $\epsilon(\quant)$:
$$
\frac{\alloc'(\quant)v(\quant)}{\rate},\,
\frac{|\alloc^{\prime\prime}(\quant)|}{2\sqrt{\samples}\alloc'(\quant)},\,
\frac{v'(\quant)}{2\sqrt{\samples}v(\quant)} \leq \epsilon(\quant).
$$

\end{proofof}

\begin{proofof}{Theorem \ref{theorem2}}
  Following the analysis of Theorem~\ref{theorem1}, 
we substitute the appropriate
expressions of $b'(\cdot)$ and $b^{\prime\prime}(\cdot)$
into Theorem \ref{th:derivative}.
Note that for the first-price auction
$$
b'(\quant)=(v(q)-b(q))\frac{\alloc'(\quant)}{\alloc(\quant)}
$$
and 
$$
b^{\prime\prime}(\quant)=v'(\quant)\frac{\alloc'(\quant)}{\alloc(\quant)}
+(v(\quant)-b(\quant))\frac{\alloc^{\prime\prime}(\quant)}{\alloc(\quant)}
-2(v(\quant)-b(\quant))\left(\frac{\alloc^{\prime}(\quant)}{\alloc(\quant)}\right)^2.
$$
This means that we can evaluate
\begin{align*}
\frac{MSE_{R(\quant)}(N)}{R(\quant)}&=O\bigg(
\frac{(v(q)-b(q))}{\rate}\frac{\alloc'(\quant)}{\alloc(\quant)}\\
&+\frac{\left|
v'(\quant)\frac{\alloc'(\quant)}{\alloc(\quant)}
+(v(\quant)-b(\quant))\frac{\alloc^{\prime\prime}(\quant)}{\alloc(\quant)}
-2(v(\quant)-b(\quant))\left(\frac{\alloc^{\prime}(\quant)}{\alloc(\quant)}\right)^2
\right|
}{2 \sqrt{\samples}\frac{(v(q)-b(q))}{\rate}\frac{\alloc'(\quant)}{\alloc(\quant)}} \bigg).
\end{align*}
To guarantee the bound $\epsilon(\quant)$, we need to bound
each of the terms by $\epsilon(\quant)$. Thus, we require that
$$
\frac{(v(q)-b(q))}{\rate}\frac{\alloc'(\quant)}{\alloc(\quant)},\,
\frac{v'(q)}{(v(q)-b(q))2\sqrt{\samples}},\,
\frac{|\alloc^{\prime\prime}(\quant)|}{\alloc'(\quant)2\sqrt{\samples}},\,
\frac{\alloc'(\quant)}{\sqrt{\samples}\alloc(\quant)} \leq \epsilon(\quant).
$$

\end{proofof}

\end{document}